\newtheorem{theorem}{Theorem}
\newtheorem{lemma}{Lemma}
\def\E{{\rm E}\,}
\DeclareMathOperator*{\argmax}{arg\,max} 
\newcommand{\bx}{\ensuremath{\boldsymbol{x}}}
\newcommand{\by}{\ensuremath{\boldsymbol{y}}}
\newcommand{\blambda}{\ensuremath{\boldsymbol{\lambda}}}
\newcommand{\tblambda}{\ensuremath{\boldsymbol{\tilde{\lambda}}}}
\newcommand{\tlambda}{\ensuremath{\tilde{\lambda}}}
\newcommand{\bgamma}{\ensuremath{\boldsymbol{\gamma}}}
\newcommand{\tbgamma}{\ensuremath{\boldsymbol{\tilde{\gamma}}}}
\newcommand{\tgamma}{\ensuremath{\tilde{\gamma}}}
\newcommand{\ba}{\ensuremath{\boldsymbol{a}}}
\newcommand{\bb}{\ensuremath{\boldsymbol{b}}}
\newcommand{\bxi}{\ensuremath{\boldsymbol{x_i}}}
\newcommand{\bVi}{\ensuremath{\boldsymbol{v_i}}}
\newcommand{\bxj}{\ensuremath{\boldsymbol{x_j}}}
\newcommand{\bVj}{\ensuremath{\boldsymbol{v_j}}}
\newcommand{\bV}{\ensuremath{\boldsymbol{v}}}
\newcommand{\bF}{\ensuremath{\boldsymbol{f}}}
\newcommand{\bFi}{\ensuremath{\boldsymbol{f_i}}}
\newcommand{\bFj}{\ensuremath{\boldsymbol{f_j}}}
\newcommand{\bxin}{\ensuremath{\boldsymbol{X_i}}}
\newcommand{\bxjn}{\ensuremath{\boldsymbol{X_j}}}
\newcommand{\byn}{\ensuremath{\boldsymbol{Y}}}
\newcommand{\bxn}{\ensuremath{\boldsymbol{X}}}
\newcommand{\bVcap}{\ensuremath{\boldsymbol{V}}}
\newcommand{\cac}{\ensuremath{\mathcal{C}}}
\newcommand{\cd}{\ensuremath{\mathcal{D}}}
\newcommand{\cp}{\ensuremath{\mathcal{P}}}
\newcommand{\cs}{\ensuremath{\mathcal{S}}}
\newcommand{\ct}{\ensuremath{\mathcal{T}}}
\newcommand{\cv}{\ensuremath{\mathcal{V}}}
\newcommand{\cx}{\ensuremath{\mathcal{X}}}
\newcommand{\cy}{\ensuremath{\mathcal{Y}}}
\begin{document}
%
\title{The Sensing Capacity of Sensor Networks}
%
%
\author{Yaron~Rachlin,~\IEEEmembership{Student~Member,~IEEE,}
        Rohit~Negi,~\IEEEmembership{Member,~IEEE,}
        and~Pradeep~Khosla,~\IEEEmembership{Fellow,~IEEE} \thanks{All authors affiliated with the Department of Electrical and Computer Engineering at Carnegie Mellon University.}}
\maketitle

\begin{abstract}
This paper demonstrates fundamental limits of sensor networks for detection problems
where the number of hypotheses is exponentially large. Such problems characterize many
important applications including detection and classification of targets in a
geographical area using a network of sensors, and detecting complex substances with a
chemical sensor array. We refer to such applications as large-scale detection problems.
Using the insight that these problems share fundamental similarities with the problem of
communicating over a noisy channel, we define a quantity called the `sensing capacity'
and lower bound it for a number of sensor network models. The sensing capacity expression
differs significantly from the channel capacity due to the fact that a fixed sensor
configuration encodes all states of the environment. As a result, codewords are dependent
and non-identically distributed. The sensing capacity provides a bound on the minimal
number of sensors required to detect the state of an environment to within a desired
accuracy. The results differ significantly from classical detection theory, and provide
an intriguing connection between sensor networks and communications. In addition, we
discuss the insight that sensing capacity provides for the problem of sensor selection.
\end{abstract}

\begin{keywords}
sensor networks, sensing capacity, detection theory, sensor selection, sensor allocation
\end{keywords}

%
\IEEEpeerreviewmaketitle

\section{Introduction}
\label{sec.intro}

\PARstart{A}\  sensor network is deployed to obtain information about the state of an
environment using multiple sensors. In many sensing applications, such as pollution
monitoring and border security, the phenomena under observation has a large scale that
exceeds the range of any one sensor. As a result, collecting measurements from multiple
sensors is essential to the sensing task. Obtaining information about an environment can
be cast as either a `detection' or an `estimation' problem. In estimation problems such
as the problem of estimating a continuous field to within a desired distortion, the state
of the environment is continuous. In detection problems, such as binary hypothesis
testing, the state of the environment is represented as a finite set of hypotheses. In
this paper we study the problem of `large-scale detection' where the state of the
environment belongs to an exponentially large, structured set of hypotheses. Large-scale
detection problems include many applications where a sensor network is deployed in order
to monitor a large-scale phenomena. We exploit the structure of large-scale detection
problems to demonstrate a fundamental information-theoretic relationship between the
number of sensor measurements and ability of a sensor network to detect the state of the
environment to within a desired accuracy.

We obtain our results by drawing on an analogy between sensor networks and channel
encoders. For a fixed sensor configuration, each state of the environment induces a
corresponding set of sensor outputs. This set of sensor outputs can be viewed as a
noise-corrupted `codeword,' which must be `decoded' in order to detect the state of the
environment. Thus, the sensor network acts as a channel encoder. In order to motivate
this perspective, we examine the following large-scale detection applications.



Robotic mapping is the first large-scale application we consider \cite{Thrun02a}. In
mapping, robots collect sensor measurements to map an unknown environment for the purpose
of navigation. \cite{Elfes89} introduced occupancy grids, one of the most popular
approaches to this problem. In occupancy grids, the world is modeled as a discrete grid,
where each grid location has a value corresponding to the state of the environment. For
example, in a binary a grid a `0' can indicate free space while a `1' can indicate an
obstacle. A robot traversing an unknown environment collects sensor measurements that
encode the state of the environment. For example, a robot using a sonar sensor emits a
wide acoustic pulse and measures the time until a reflected pulse is sensed. These
readings are ambiguous, since one cannot infer the precise location of the obstacle that
caused the reflection from a single sensor reading. In addition, sonar readings are
noisy. As a result, multiple sensor measurements must be used to distinguish among an
exponentially large number of possible grid states. The sequence of sonar readings can be
viewed as a noise-corrupted codeword corresponding to the state of the grid. While
robotic mapping systems have been successfully implemented in practice, little can be
said about their theoretical performance. Theoretical understanding could shed light on
the number of sensor measurements required to map an unknown environment. In addition,
theory can provide insight into questions about sensor selection. Is it better to use
cheap, low power, wide angle sensors or expensive, high power, narrow angle sensors? A
theoretical framework could provide general insight into such sensor selection questions.


Video surveillance is another large scale detection problem. \cite{Collins_2000_3325}
used multi-camera sensor networks to detect and track objects across multiple areas, and
\cite{Hoover99} uses multiple cameras to localize moving objects in a room. The region
under surveillance can be viewed as a three-dimensional grid. For example each grid
position can have a binary value, representing motion or lack of motion in that grid
position. As in the previous example, the number of states of this grid is exponential in
the number of grid blocks. Each camera observes a subset of grid blocks, but introduces
ambiguity by reducing a three-dimensional volume to a two-dimensional image. As a result
multiple camera images must be combined to detect the state of the environment. The set
of images encode the grid state. While practical systems for surveillance applications
are deployed, a theoretical framework for understanding performance limits for such
problems is not available.


Identifying a complex chemical substance is a third example of a large-scale detection
problem. In this application the output of chemical sensor arrays, consisting of
heterogeneous chemical sensors, is used to distinguish among a large number of substances
\cite{Burl02}. Each substance can be modeled as a mixture of constituent chemicals at
various discrete concentration levels, resulting in an exponentially large set of
possible states. Each chemical sensor in the array reacts to a subset of chemicals. For
example, sensors can output a voltage proportional to a weighted sum of the
concentrations of a subset of chemicals. The output of a chemical sensor array encodes
the state of the sample being sensed. As in the previous two examples, theory could
provide insight into the practical design of such sensor arrays.


Target detection and classification in a geographical area is an important class of
applications for sensor networks \cite{Li02}, and a final motivating example of a
large-scale detection problem. We consider the problem of detection and classification
using seismic sensors, as demonstrated in \cite{Li02},\cite{Tian02}. The environment can
be modeled as a discrete grid, where each position can contain targets of multiple types.
The number of target configurations is exponential in the number of grid blocks. Seismic
sensors are scattered randomly on this grid, and sense the vibrations of targets over
subsets of the grid. The intensity of vibration is dependent on the target's distance
from the sensor, and therefore a single sensor cannot distinguish between many targets
far away and a single target nearby. The set of seismic sensor outputs encode the
location and class of targets in the field.


All of the examples considered above share the following common elements. The state can
be modeled as a discrete vector or grid, and the number of states is exponentially large.
Sensors output noise-corrupted functions of subsets of the vector or grid. These sensor
measurements must be fused in order to detect the state of the environment. In this paper
we analyze the fundamental limits of this process by using the insight that the problem
of large-scale detection and the problem of communicating over a noisy channel share
essential similarities.

\section{Sensor Networks and Communication Channels}
\label{sec.sensorsandchannels}

The examples described in Section \ref{sec.intro} motivate the sensor network model shown
in Figure \ref{fig.sensor_channel}. A discrete target vector $\bV$ represents the state
of the environment. In this paper, the term `state' and `target vector' are used
interchangeably. A fixed sensor configuration encodes the state as a vector of noiseless
sensor outputs that form the codeword $\bx$. The observed, noisy sensor measurements are
written as $\by$, a noise-corrupted version of $\bx$. Finally, a detection algorithm uses
$\by$ to compute a guess of the state of the environment $\hat{\bV}$.

The sensor model shown in Figure \ref{fig.sensor_channel} is similar to the classical
communication channel model shown in Figure \ref{fig.communication_channel}. The target
vector $\bV$ corresponds to the message $m$ being sent. The sensor network acts as a
channel encoder, producing the codeword $\bx$. Finally, a detection algorithm acts as a
channel decoder on the noise corrupted codeword $\by$. Shannon's celebrated Channel
capacity results provide limits for the communications channel \cite{Shannon48}.
Motivated by the similarity between the sensor network model and the communication
channel model, we defined and bounded the \emph{sensing capacity} in \cite{RachlinITW04}.
The sensing capacity plays a role in our sensor network model analogous to the role of
channel capacity in a communications channel. However, because the models differ in
significant ways, the notions of channel capacity and sensing capacity also differ.

The most important difference between the sensor network model and a communication
channel model is at the encoder. In communications, the content of the message and its
codeword representation can be decoupled. Further, the channel encoder can implement any
mapping between message and codeword. As a result, two highly similar messages can be
differentiated with arbitrarily high accuracy. In contrast, a sensor network encoder uses
the same sensor configuration to encode all states of the environment. Further, since
sensors react to some phenomena in the environment and are limited by physical
constraints, the codeword associated with a particular state of the environment is a
direct function of that state. Therefore the state and its codeword representation are
coupled. As a result, two highly similar states of the environment cannot be
distinguished with arbitrarily high accuracy. While similarities between the sensor
network model and the channel model motivate the application of insights about
communications from information theory, significant differences between the two models
require care in applying such insights in order to understand the impact of these
differences on the final theoretical results.

Section \ref{sec.intro_overview} provides an overview of the main results presented in
this paper, and reviews related work. Section \ref{sec.itw} presents sensing capacity
results for non-spatial (e.g. chemical) sensing applications, while Section
\ref{sec.ipsn} demonstrates sensing capacity results for a sensor network model that
accounts for spatial locality in sensor observations. Section \ref{sec.concs} concludes
the paper and discusses future work.

\section{Main Results and Related Work}
\label{sec.intro_overview}

We review the main theoretical results presented in this paper. In Section \ref{sec.itw}
we introduce a simple but useful sensor network model that can be used to model sensing
applications such as chemical sensing applications and computer network monitoring. For
this model, we define and bound the sensing capacity. The sensing capacity bound differs
significantly from the standard channel capacity results, and requires novel arguments to
account for the constrained encoding of a sensor network. This is an important
observation due to the use of mutual information as a sensor selection heuristic
\cite{DurrantWhyte94}. Our result shows that this is not the correct metric for
large-scale detection applications. Extensions are presented to account for non-binary
target vectors, target sparsity, and heterogeneous sensors. Plotting the sensing capacity
bound, we demonstrate interesting sensing tradeoffs. For example, perhaps
counter-intuitively, sensors of shorter range can achieve a desired detection accuracy
with fewer measurements than sensors of longer range. Finally, we also compare our
sensing capacity bound to simulated sensor network performance.

In Section \ref{sec.ipsn} we introduce a sensor network model that accounts for
contiguity in a sensor's field of view. Contiguity is an essential aspect of many classes
of sensors. For example, cameras observe localized regions and seismic sensors sense
vibrations from nearby targets. We demonstrate sensing capacity bounds that account for
such sensors by extending results about Markov types \cite{Csiszar98}, and use convex
optimization to compute these bounds. The first result in Section \ref{sec.ipsn} assumes
the state of the environment is modeled as a one-dimensional vector. In Section
\ref{sec.2dfields} we extend this result to the case where the state of the environment
is modeled as a two-dimensional grid. While a one-dimensional vector can model sensor
network applications such as border security and traffic monitoring, results about two
dimensions significantly increase the type of applications described by our models.


The performance of sensor networks is limited by both sensing resources and non-sensing
resources such as communications, computation, and power. One set of results has been
obtained by considering the limitations that communications requirements impose on a
sensor network. \cite{DuarteMelo03} extends the results in \cite{Gupta00} to account for
the different traffic models that arise in a sensor network. \cite{Barrenchea04} studies
network transport capacity for the case of regular sensor networks. \cite{Mitra04}
studies the impact of computational constraints and power on the communication efficiency
of sensor networks. \cite{Hu04} has considered the interaction between transmission rates
and power constraints. Another set of results has been obtained by extending results from
compression to sensor networks. Distributed source coding \cite{Slepian73},\cite{Wyner76}
provides limits on the compression of separately encoded correlated sources.
\cite{Pradhan02} applies these results to sensor networks. \cite{Xiong04} provides an
overview of this area of research.

In contrast to the work mentioned above, we focus directly on the limits of detecting the
state of the environment using noisy sensor observations. The notion of sensing capacity
characterizes the limits that sensing (e.g. sensor type, range, and noise) imposes on the
attainable accuracy of detection. We do not examine the compression of sensor
observations, or the resources required to communicate sensor observations to a point in
the network. Instead, we focus on the limits of detection accuracy assuming complete
availability of noisy sensor observations. Among existing work in information theory, the
problem we investigate in this paper is unlike a source coding problem, and is similar to
a channel coding problem. However, the sensor network model we investigate is
fundamentally different than a standard communications channel.

Our work is most closely related to work on detection and classification in sensor
networks. \cite{Varshney97} describes a large body of work on distributed detection where
the number of hypotheses is small. \cite{Chamberland03}, \cite{Chamberland04} extend this
work to consider a decentralized binary detection problem with noisy communication links
to obtain error exponents. \cite{DCosta04} analyzes the performance of various
classification schemes for classifying a Gaussian source. This is an m-ary problem where
the number of hypotheses is small. \cite{Kotecha05} analyzes the performance suboptimal
classification schemes for classifying multiple targets. While the number of hypotheses
is exponential in the number of targets, the large-scale detection problem of a large
number of targets is not considered. \cite{Chakrabarty01} considers the problem of sensor
placement for detecting the location of one or few targets in a grid. This problem is
most closely related to the large-scale detection problems addressed in this paper.
However, due to restrictions on the numbers of targets, the number of hypotheses remains
small in comparison to a large-scale detection problem. A coding-based approach was used
to bound the minimum number of sensors required for discrimination, and to propose
structured sensor configurations. However, sensors were noiseless, and of limited type,
and no notion of sensing capacity was considered. In contrast to existing to existing
work on detection and classification in sensor networks, we demonstrate fundamental
performance limits for large-scale detection problems.

The problem of estimating a continuous field using a sensor network is an active area of
research. \cite{Scaglione02} considers the relationship of transport capacity and the
rate distortion function of a continuous random processes. \cite{Nowak04} proves limits
on the estimation of an inhomogeneous random fields using sensor that collect noisy point
samples. Other work on the problem of estimating a continuous random field includes
\cite{Marco03},\cite{Ishwar03}, \cite{Bajwa05},\cite{Kumar04}. \cite{Gastpar05} considers
the estimation of continuous parameters of a set of underlying random processes through a
noisy communications channel. The results presented in this paper consider the detection
of a discrete state of an environment. We do not consider extensions to environments with
a continuous state.

\section{Sensing Capacity of the Arbitrary Connections Model}
\label{sec.itw}

In this section we define and analyze the sensing capacity of the arbitrary connections
model, a simple but useful model introduced in \cite{RachlinITW04}. We denote random
variables and functions by upper-case letters, and instantiations or constants by
lower-case letters. Bold-font denotes vectors. $\log(\cdot)$ has base-2. Sets are denoted
using calligraphic script. $D(P||Q)$ denotes the Kullback-Leibler distance and $H(P)$
denotes entropy of a random variable with probability distribution $P$. $H(Q|P)$ is the
conditional entropy of a random variable with conditional probability distribution $Q$
given another random variable with probability distribution $P$.

\subsection{Arbitrary Connections Model}
\label{sec.arbitraryconnsmodel}

Figure \ref{fig.ind_model} shows an example of the arbitrary connections model. The state
of the environment is modeled as a $k$-dimensional binary target vector $\bV$. Each
position in the vector may represent the presence of a target in an actual region in
space, or may have other interpretations, such as the presence of a specific chemical in
a sample. The possible target vectors are denoted $\bVi$, $i\in\{1,\ldots,2^k\}$. We say
that `a certain $\bV$ has occurred' if that vector represents the true state. We define a
sensor network $s(k,n)$ as a graph showing the connections of $n$ sensors to $k$
positions in the target vector. The sensor network has $n$ identical sensors. Sensor
$\ell$ senses exactly $c$ out of the $k$ spatial positions (shown in the graph as $c$
connections). We refer to such sensors as having a range $c$. Ideally, each sensor
produces a  value $x \in {\cal X}$ that is an {\em arbitrary function of the targets}
which it senses, $x_{\ell}=\Psi(v_{\ell t_1},\ldots,v_{\ell t_c})$. Thus, the `ideal
output vector'  of the sensor network $\bx$ depends on the sensor connections, and on the
target vector $\bV$ that occurs. However, we assume that each sensor output $y
\in\mathcal{Y}$ is corrupted by noise, so that the conditional p.m.f. $P_{Y|X}(y|x)$
determines the output. Since the sensors are identical, $P_{Y|X}$ is the same for all the
sensors. Further, we assume that the noise is independent in the sensors, so that the
`sensor output vector' $\by$ relates to the ideal output $\bx$ as
$P_{\byn|\bxn}(\by|\bx)=\prod_{\ell=1}^{n}P_{Y|X}(y_{\ell}|x_{\ell})$. Given the noise
corrupted output $\by$  of the sensor network, we detect the target vector $\bV$ which
occurred by using a detector $g(\by)$. Because of the constrained encoding of a sensor
network, we allow the decoder a distortion of $D\in[0,1]$. Denoting
$d_{\text{H}}(\bVi,\bVj)$ as the Hamming distance between two target vectors, the
tolerable distortion region of $\bVi$ is
$\cd_{\bVi}=\{j:\frac{1}{k}d_{\text{H}}(\bVi,\bVj) < D\}$. Given that $\bVi$ occurred,
the detector is in error if $g(\by)\not\in\cd_{\bVi}$.

Figure \ref{fig.ind_model} shows the target vector $\bV=(0,0,1,0,1,1,0)$ indicating $3$
targets among the $7$ target positions. In this example, the sensing function $\Psi$ is a
sum that indicates the number of positions which contain a target, $x_{\ell}=\sum_{u=1}^c
v_{\ell t_u}$, so that $x \in {\cal X} = \{0,1,\ldots,c\}$. Such a function could model a
chemical sensor that is sensitive to a subset of chemicals and whose output is linearly
proportional to the number of such chemicals present in the sample. More complex, e.g.
nonlinear, relationships between chemicals and sensor output require a different choice
of $\Psi$. In the figure,each sensor senses two target positions, and the sensors encode
the target vector as $\bx=(1,0,2,1)$. However, due to noise, the observed vector of
sensor outputs is $\by=(1,1,2,1)$. The target vector $\bV'=(0,1,1,0,1,1,0)$, which
differs from $\bV$  in one target position, is encoded as $\bx=(1,1,2,1)$. As a result a
detection algorithm can easily confuse $\bV'$ for $\bV$, demonstrating the limitation
imposed by the constrained encoding of a sensor network.

The arbitrary connections model describes large-scale detection problems that do not have
a spatial aspect. Examples of such applications include the detection of complex chemical
and computer network monitoring. Disease detection in a population where individual
sample can be combined is another such application. In addition to practical utility,
this model is easy to analyze and provides useful insights into large-scale detection
problems.

\subsection{Sensing Capacity Definitions}
\label{sec:itw_defs}

How many sensor measurements must a sensor network collect to detect the a target vector
to within a desired distortion? To answer this question we define the idea of a `sensing
capacity.' The probability of error of a sensor network given a that target vector $\bVi$
occurred is $P_{e,i,s}=\text{Pr}(\text{error}|i,s,\bxi, \by )=\text{Pr}(g(\by)
\not\in\cd_{\bVi}|\bVi,s,\bxi,\by )$. The expected probability of error for a sensor
network is $P_{e,s}= \sum_i P_{e,i,s}P_{\bVcap}(\bVi)$. The rate $R$ of a sensor network
is defined as the ratio of target positions being sensed to the number of sensor
measurements, $R=\frac{k}{n}$.  The \emph{sensing capacity} of a sensor network, $C(D)$,
is defined as the maximum rate $R$ such that below this rate there exists a sequence of
sensor networks $s(\lceil nR\rceil,n)$  whose expected probability of error across all
target vector goes to zero with increasing $n$, that is, $P_{e,s}\rightarrow 0$ as
$n\rightarrow \infty$ at a fixed rate $R$.

Is $C(D)$ nonzero? One of the main contributions of the theorem presented in this section
is to demonstrate that the sensing capacity can be strictly positive for the arbitrary
connections model. We use a random coding argument to obtain a bound on the sensing
capacity for the arbitrary connections model. Instead of constructing a sequence of
sensor network directly, we bound the expected probability of error, averaged over a
randomly generated ensemble of sensor networks. The sensor networks are generated as
follows. Each sensor connects to $c$ randomly chosen target positions out of the $k$
possible positions. The connections are made independently, and are chosen with
replacement. Therefore a sensor can choose the same target position more than once. When
we take the expectation over all such randomly generated sensor networks, the ideal
sensor outputs associated with each target vector become random. Since a sensor network
produces a codeword that is a function of the target vector, codeword distribution
depends on the occurring target vector. We denote the random vector which occurs when
$\bVi$ is the target vector as $\bxin$. Because each sensor forms its connections
independently,$P_{\bxin}(\bxi)=\prod_{\ell=1}^{n} P_{X_i}(x_{i\ell})$. It is important to
note that sensor outputs are in general not independent, and are only independent when we
condition on the occurrence of a particular target vector. Further, it is important to
note that the random vectors $\bxin$ and $\bxjn$, associated with a {\em pair of target
vectors}, $\bVi$ and $\bVj$ respectively, are {\em not independent}, since the sensor
connections produce a dependency between them. Thus, the `codewords' $\{\bxin,
i=1,2,\ldots,2^k\}$ of the sensor network are non-identical and dependent on each other,
unlike channel codes in classical information theory. Using this probabilistic model for
sensor network generation, we write the expected probability of error, averaged over the
sensor network ensemble as $P_{e}=\E_{S} [P_{e,S}]$. Theorem \ref{theorem1} in Section
\ref{sec.arbitthm} bounds this quantity to prove a lower bound $C_{LB}(D)$ on the sensing
capacity $C(D)$.

The statement of the result presented in this section relies on the method of types
\cite{Csiszar98}, and requires an explanation of {\em types} and {\em joint types}. Since
in the random sensor network construction each sensor connects to $c$ target positions
independently, the distribution of a sensor's ideal output $X_{i}$ depends only on the
type $\bgamma =(\gamma_0 , \gamma_1 )$ of $\bVi$. The type of target vector $\bVi$ is a
histogram of the number of 0's and 1's in $\bVi$. Here, $\gamma_0 $ denotes the fraction
of zeros in $\bVi$. Since sensor connections are generated independently and uniformly
across target positions in the arbitrary connections model we can write,
$P_{\bxin}(\bxi)= P^{\bgamma,n}(\bxi) = \prod_{\ell=1}^{n} P^{\bgamma }(x_{i\ell})$ for
all $\bVi$ of the same type $\bgamma$.

Since a single sensor network encodes all target vectors, pairs of codewords are
dependent, unlike codes in communications. The joint probability of two codewords
$P_{\bxin\bxjn}$ depends on the {\em joint type} of the target vectors $\bVi$ and $\bVj$.
The joint type is $\blambda = (\lambda_{00},\lambda_{01},\lambda_{10},\lambda_{11})$.
Here, $\lambda_{01}$ is the fraction of positions in $\bVi,\bVj$ where $\bVi$ has bit `0'
while $\bVj$ has bit `1'. Similarly, we define $\lambda_{00},\lambda_{10},\lambda_{11}$.

Following the notation introduced in \cite{Csiszar98}, $\blambda \ \in \
{\cp}_k(\{0,1\}^2)$, indicating that $\blambda$ is in the set of joint types of $k$-bit
binary vector pairs. Again, since sensor connections are generated independently and with
uniform probability across target positions, $P_{\bxin,\bxjn}(\bxi,\bxj) =
P^{\blambda,n}(\bxi,\bxj) = \prod_{\ell=1}^{n} P^{\blambda}(x_{i\ell},x_{j\ell})$ for all
$\bVi,\bVj$ of the same joint type $\blambda$. Since the joint type $\blambda$ also
defines the type $\bgamma$ of $\bVi$, we have $\lambda_{00}+\lambda_{01}=\gamma_0 $,
$\lambda_{10}+\lambda_{11}=\gamma_1$.

We give specific examples of these quantities using the example shown in Figure
\ref{fig.ind_model} where $c=2$ and sensors count the number of targets present in the
target positions that they sense. Table \ref{tab.joint} lists the types of four vectors
$\bVj$, and their joint type with the target vector in the example $\bVi=0010110$. Given
a target vector, a sensor will output `2' only if both of its connections connect to
positions with a `1.' For a vector of type $\bgamma$, this occurs with probability
$(\bgamma_1)^2$. Table \ref{tab.XiTable} describes the complete output p.m.f. for a
randomly generated sensor, given that a vector of type $\bgamma$ occurred. Given two
target vectors $\bVi,\bVj$ of joint type $\blambda$, a sensor will output `0' for both
target vectors only if both its connections are connected to target positions that have a
`0' bit in both these target vectors. This happens with probability $(\lambda_{00})^2$.
Table \ref{tab.jointTable} lists the complete joint p.m.f. $P_{X_iX_j}(x_{i},x_{j})=
P^{\blambda}(x_{i},x_{j})$ of a randomly generated sensor for two target vectors with a
joint type $\blambda$.

\subsection{Sensing Capacity Lower Bound}
\label{sec.arbitthm}

We specify two probability distributions which we will utilize in the main theorem of
this section. The first is the joint distribution of the ideal output $\bxi$ when $\bVi$
occurs, and the noise corrupted output $\by$, $P_{\bxin\byn}(\bxi,\by)=
\prod_{\ell=1}^{n}P_{X_iY}(x_{i\ell},y_{\ell}) =
\prod_{\ell=1}^{n}P_{X_i}(x_{i\ell})P_{Y|X}(y_{\ell}|x_{i\ell})$. The second distribution
is the joint distribution of the ideal output $\bxi$ corresponding to $\bVi$ and the
noise corrupted output $\by$ generated by the occurrence of a {\em different} target
vector $\bVj$. We can write this joint distribution as $Q_{\bxin\byn}^{(j)}(\bxi,\by)=
\prod_{\ell=1}^{n}Q_{X_iY}^{(j)}(x_{i\ell},y_{\ell}) = \prod_{\ell=1}^{n}\sum_{a \in
{\cx}}P_{X_i,X_j}(x_{i\ell},x_j=a)P_{Y|X}(y_{\ell}|x_j=a)$. Note that although $\byn$ was
produced by $\bxjn$, $\bxin$ and $\byn$ are dependent because of the dependence of
$\bxin$ and $\bxjn$.

We argued earlier that due to the random sensor network construction, $P_{\bxin}$ and
$P_{\bxin\bxjn}$ can be compute using the type $\bgamma$ of $\bVi$ and joint type
$\blambda$ of $\bVi,\bVj$ respectively. Thus, we write $P_{\bxin\byn}(\bxi,\by) =
\prod_{\ell=1}^{n} P_{{X_{i}Y}}^{\bgamma}(x_{i\ell},y_{\ell})$ where
$P_{{X_{i}Y}}^{\bgamma}(x_{i},y) = P^{\bgamma}(x_{i})P_{Y|X}(y|x_{i})$. Similarly, we
write $Q_{\bxin\byn}^{(j)}(\bxi,\by)=\prod_{\ell=1}^{n}
Q_{{X_{i}Y}}^{\blambda}(x_{i\ell},y_{\ell})$ where $Q_{{X_{i}Y}}^{\blambda}(x_{i},y) =
\sum_{a \in {\cx}} P^{\blambda}(x_{i},x_j=a) P_{Y|X}(y|x_j=a)$. We can now ready to state
the main theorem of this section.

\begin{theorem}[Sensing Capacity Theorem for the Arbitrary Connections Model] \label{theorem1}
The sensing capacity at distortion $D$ is bounded as,
\begin{equation}
C(D) \geq C_{LB}(D) =
  \min_{\begin{subarray}{c}\blambda \\ \lambda_{01}+\lambda_{10}\geq D \\
\lambda_{00}+\lambda_{01}=\gamma_{0} \\
\lambda_{10}+\lambda_{11}=\gamma_{1}
\end{subarray}}\frac{D\left(P_{X_iY}^{\bgamma}\|Q_{{X_{i}Y}}^{\blambda}\right)}{H(\blambda)-H(\bgamma)}
\end{equation}
where $\bgamma=(0.5,0.5)$ and
$\blambda=(\lambda_{00},\lambda_{01},\lambda_{10},\lambda_{11})$ is an arbitrary
probability mass functions.
\end{theorem}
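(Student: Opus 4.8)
The plan is to establish this lower bound by a random-coding (achievability) argument tailored to the dependent, non-identically distributed codewords produced by a single sensor configuration, using the method of types to decouple a \emph{counting} estimate over competing target vectors from a \emph{pairwise error} estimate. Under the uniform prior the empirical type of the occurring vector concentrates on $\bgamma=(0.5,0.5)$, so I would fix a true vector $\bVj$ of this typical type and bound its conditional error averaged over the random ensemble; the vanishingly small fraction of atypical true vectors contributes negligibly to $P_e$ since each conditional error is at most one. As the decoder I would take a joint-typicality rule: given $\by$, declare any candidate whose codeword–observation pair is typical under the correct-pairing law $P_{X_iY}^{\bgamma}$ of its type. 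A decoding error then requires either that the true pair $(\bxjn,\by)$ fail to be typical, an event of vanishing probability by the AEP, or that some competitor $\bVi$ lying outside the distortion ball $\cd_{\bVj}$ produce a typical pair $(\bxin,\by)$.

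The argument then rests on two type-based estimates indexed by the joint type $\blambda$ of the pair $(\bVi,\bVj)$. First, for a fixed true vector of type $\bgamma$, the number of competitors $\bVi$ of joint type $\blambda$ is $\binom{\gamma_0 k}{\lambda_{01}k}\binom{\gamma_1 k}{\lambda_{10}k}$; a Stirling estimate shows this equals $2^{k(H(\blambda)-H(\bgamma))}$ to first exponential order, which is exactly the denominator of the bound, and since $H(\blambda)-H(\bgamma)$ is the conditional entropy $H(V_i\mid V_j)\ge 0$ it is nonnegative, with the marginal constraints $\lambda_{00}+\lambda_{01}=\gamma_0$, $\lambda_{10}+\lambda_{11}=\gamma_1$ pinning the retained (dominant) competitors to type $\bgamma=(0.5,0.5)$. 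Second, conditioned on $\bVj$ being true, the competitor pair $(\bxin,\by)$, formed from $\bVi$'s codeword and the observation generated by $\bVj$, is i.i.d. across the $n$ sensors with per-sensor law $Q_{X_iY}^{\blambda}$, because the dependence of $\bxin$ on $\bxjn$ is governed entirely by $\blambda$ while $\by$ is produced from $\bxjn$ through $P_{Y|X}$. By the method of types (Sanov's theorem), the probability that such a $Q_{X_iY}^{\blambda}$-distributed pair lands in the $P_{X_iY}^{\bgamma}$-typical set is $2^{-nD(P_{X_iY}^{\bgamma}\|Q_{X_iY}^{\blambda})}$ to first exponential order, supplying the numerator.

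Combining these through a union bound grouped by joint type gives, up to polynomial-in-$k$ factors from the number of types,
\[
P_{e,j}\ \le\ \Pr[(\bxjn,\by)\ \text{atypical}]\ +\ \sum_{\blambda:\ \lambda_{01}+\lambda_{10}\ge D} 2^{k(H(\blambda)-H(\bgamma))}\,2^{-nD(P_{X_iY}^{\bgamma}\|Q_{X_iY}^{\blambda})}.
\]
Since the number of joint types is polynomial in $k$, the sum is dominated by its largest exponent, and substituting $k=nR$ the exponent of a given term is $n\big[R(H(\blambda)-H(\bgamma))-D(P_{X_iY}^{\bgamma}\|Q_{X_iY}^{\blambda})\big]$. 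Requiring this to be strictly negative for every admissible $\blambda$ forces $R<D(P_{X_iY}^{\bgamma}\|Q_{X_iY}^{\blambda})/(H(\blambda)-H(\bgamma))$; minimizing over the feasible joint types (those with $\lambda_{01}+\lambda_{10}\ge D$ and the stated marginal constraints) yields $P_e\to 0$ whenever $R<C_{LB}(D)$, hence $C(D)\ge C_{LB}(D)$. Degenerate joint types with $H(\blambda)=H(\bgamma)$, such as the complement coupling, make the ratio infinite and do not bind the minimum, so the constrained minimization is well defined.

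I expect the main obstacle to be the second estimate, namely correctly handling the codeword dependence that is absent in classical channel coding. Because one fixed sensor configuration encodes every target vector, the codewords $\bxin$ and $\bxjn$ are neither independent nor identically distributed; their coupling is precisely what the joint-type law $P^{\blambda}$, and hence $Q_{X_iY}^{\blambda}$, records. In the ordinary channel setting the competing codeword is independent of the observation, so the analogous $Q$ would factor into the product of marginals and the exponent would reduce to a mutual information; here the non-product $Q_{X_iY}^{\blambda}$ is essential and must be derived by marginalizing $\bxjn$ through the channel. The care required is in verifying that, conditioned on the joint type, the competitor pair is genuinely i.i.d. across sensors with this per-letter law, which is what legitimizes the Sanov estimate. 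The remaining pieces, the AEP bound on the true pair, the polynomial type count, and the concentration of the uniform prior on $\bgamma=(0.5,0.5)$, are standard once this dependence structure is pinned down.
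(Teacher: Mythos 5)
Your overall architecture --- a random sensor ensemble, grouping competitors by their joint type $\blambda$ with the true vector, the counting bound $2^{k(H(\blambda)-H(\bgamma))}$, and a union bound exploiting the polynomial number of types --- parallels the paper's proof, and your identification of the per-sensor i.i.d.\ law $Q_{X_iY}^{\blambda}$ for a competitor's codeword paired with the observation is correct. The paper, however, uses a maximum-likelihood detector with Gallager's $\rho$-trick: pairwise likelihood ratios raised to $\frac{1}{1+\rho}$, grouping by type, and then $\rho\to 0$ using the identity $\partial E(\rho,\blambda)/\partial\rho\big|_{\rho=0}=D\left(P_{X_iY}^{\bgamma}\|Q_{X_iY}^{\blambda}\right)$; the restriction to $\bgamma=(0.5,0.5)$ comes from the diverging term $R(1-H(\bgamma))/\rho$, not from concentration of the uniform prior. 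These are cosmetic differences; the substantive problem is below.

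The genuine gap is that your typicality/Sanov decoder produces the \emph{mirror image} of the exponent in the theorem, and because the codewords are dependent and Kullback-Leibler divergence is asymmetric, the two do not agree. In the theorem (and in the ML/Gallager route), the divergence pairs the \emph{true} vector's codeword with the observation: $P_{X_iY}^{\bgamma}$ is the correct-pairing law of the true vector, $Q_{X_iY}^{\blambda}$ is the hybrid law in which that same codeword is paired with an observation re-explained through the competitor, and the marginal of $\blambda$ pinned to $(0.5,0.5)$ is precisely the one appearing as the first argument of the divergence. Your decoder instead asks whether the \emph{competitor's} codeword looks correctly paired with the observation; Sanov then yields $D\left(P_{X_iY}^{\mu}\|Q_{X_iY}^{\blambda}\right)$ where $\mu$ is the competitor's (free) type, while the marginal pinned to uniform is the other one (the observation generator's). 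So you minimize a different objective over a different constraint set: the uniformity constraint sits on the opposite marginal of $\blambda$, and the first argument of the divergence is the free marginal rather than the pinned one. In classical channel coding the two orientations coincide (independent codewords make both collapse to mutual information, as you note), but here they differ: for instance with $c=1$, a noiseless channel, and $\blambda$ having rows $(0.5,\,0)$ and $(0.2,\,0.3)$, the theorem's numerator is $0.5\log(5/3)\approx 0.37$ while the mirrored numerator at the corresponding transposed point is $0.7\log(1.4)\approx 0.34$, with equal denominators. Hence your argument, even if every step is made rigorous, establishes a valid but \emph{different} lower bound on $C(D)$, and it does not prove the stated theorem unless you additionally show your expression dominates the stated one, which is false in general. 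A symptom of this confusion is your remark that the marginal constraints pin ``the retained (dominant) competitors'' to type $(0.5,0.5)$: in your own setup it is the true vector whose type is pinned, and the competitors' types must range freely in the minimization.
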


The most striking difference between the result shown in Theorem \ref{theorem1}, and
Shannon's channel capacity results is that the bound on the sensing capacity is not a
mutual information. From the definition of $Q_{{X_{i}Y}}^{\blambda}$, we notice that if
the `codewords' $\bxin$ were independent, the Kullback-Leibler distance would reduce to
the mutual information between $X_i$ and its noisy version $Y$. This is an important
difference because of the frequent use of mutual information as a sensor selection metric
(e.g. \cite{DurrantWhyte94}), and indicates that the mutual information is not the
correct notion of information for large-scale detection applications. The difference
between channel capacity and sensing capacity arises due to different codeword
geometries. In proofs of the achievability of channel capacity, since a codeword can be
arbitrarily assigned to a message in communications, codewords are distributed uniformly.
In a sensor network, the codeword distribution depends on the state of the environment
(the target vector). Codewords are clustered, with similar target vectors encoded as
similar codewords.  As a result, similar target vectors are more likely to be confused
due to noise than dissimilar target vectors. The Kullback-Leibler distance in Theorem
\ref{theorem1} is the appropriate information measure for such a codeword geometry. The
denominator in Theorem (\ref{theorem1}) accounts for disparities in the size of codewords
clusters. The minimization over the joint type appears because the ``closest'' target
vectors dominate the error probability. Thus, the sensing capacity is similar to
classical channel capacity, with differences arising due to the non-identical, dependent
codeword distribution.

The proof of Theorem \ref{theorem1} broadly follows the proof of channel capacity
provided by Gallager \cite{Gallager68}, by analyzing a union bound of pair-wise error
probabilities, averaged over randomly generated sensor networks. However, it differs from
\cite{Gallager68} in several important ways. In our sensor network model, the codewords
are dependent on each and non-identically distributed. To prove our bound, we group the
exponential number of pair-wise error terms into a polynomial number of terms using the
method of types.

\begin{proof}

We assume a maximum-likelihood detector $g_{\text{ML}}(\by)=\argmax_j
P_{\byn|\bxn}(\by|\bxj)$. For this detector, we consider $P_{e}=\frac{1}{2^k}\sum_i \
P_{e,i}$, where we assume that the target vectors are equally likely. $P_{e,i}$ is the
error probability when the $i^th$ target vector occurs, averaged over all randomly
generated sensor networks. For a fixed sensor network $s$ there is a known and fixed
correspondence between target vectors $\bVi$ and codewords $\bxi$. Since our sensor
network is chosen randomly, the set of codewords is random,
$\cac=\{\bxn_1,\ldots,\bxn_{2^{k}}\}$.
\begin{equation} \label{Pe1}
P_{e} = \E_{\bVcap\byn\cac}\left[ \text{Pr}(g(\byn) \not\in\cd_{\bVcap}|\bVcap,\cac,\byn
) \right]
\end{equation}
Using the fact that we are taking the expectation of a probability, we bound $P_e$ as
follows,
\begin{equation} \label{Pe2}
P_{e} \leq \E_{\bVcap\byn\cac}\left[\sum_w \text{Pr}(g(\byn) \in\cs_w|\bVcap,\cac,\byn
)^{\rho} \right]
\end{equation}
where $\rho\in [0,1]$, and $\{\cs_1,\cs_2,\ldots\}$ is a partition of the complement of
$\cd_{\bVcap}$, denoted $\cd_{\bVcap}^C$. Using the union bound, we upper bound the
probability $\text{Pr}(g(\byn) \in\cs_w|\bVcap,\cac,\byn )$ as follows,
\begin{equation} \label{Pe3}
P_{e} \leq \E_{\bVcap\byn\cac}\left[\sum_w \left( \sum_{j \in \cs_w}
\text{Pr}(g(\byn)=j|\bVcap,\cac,\byn )\right)^{\rho} \right]
\end{equation}
The term $\text{Pr}(g(\byn)=j|\bVcap,\cac,\byn )$ is a pairwise error term that depends
only on the codewords $\bxin$ and $\bxjn$. Using this observation, the fact that
$x^{\rho}$ is a concave function for $\rho \in [0,1]$, and Jensen's inequality, we
obtain,
\begin{equation} \label{Pe4}
P_{e} \leq \E_{\bVcap\byn\bxin}\left[\sum_w \left( \sum_{j \in \cs_w}
\E_{\bxjn|\bxin}\left[\text{Pr}(g(\byn)=\bVj|\bVcap,\bxin,\bxjn,\byn )\right]
\right)^{\rho} \right]
\end{equation}
The term $\text{Pr}(g(\byn)=\bVj|\bVcap,\bxin,\bxjn,\byn )$ is a one zero function,
equaling one when $g(\byn)=\bVj$ and zero otherwise. Using our assumption that $g$ is an
ML detector we upper bound this probability as follows,
\begin{multline} \label{Pe5}
P_{e} \leq \frac{1}{2^k}
\sum_{i}\sum_{\bxi\in\cx^n}\sum_{\by\in\cy^n}P_{\bxin}(\bxi)P_{\byn|\bxn}(\by|\bxi)
\\ \cdot \sum_{w} \left(\sum_{j\in\cs_w}\sum_{\bxj\in\cx^n} P_{\bxjn|\bxin}(\bxj|\bxi)\left( \frac{P_{\byn|\bxn}(\by|\bxj)}{P_{\byn|\bxn}(\by|\bxi)}\right)^{\frac{1}{1+\rho}} \right)^{\rho}
\end{multline}
The bound in equation (\ref{Pe5}) has an exponentially large number of terms. Earlier in
this paper, it was shown that the distributions in this bound can be completely specified
by the type $\bgamma$ and joint type $\blambda$ rather than the specific $i,j$ pair of
target vectors. To do this, we choose each $\cs_w$ to be a distinct joint type
$\blambda$, and let $w$ index the set $S_{\bgamma}(D)$ of all $\blambda$ that are the
joint type of $\bVi$ and $\bVj\in \cd_{\bVi}^C$. We group the summation over $i$
according to the type of $\bVi$. Grouping according to the type and joint type enables us
to take advantage of the fact that the number of types is polynomial in $k$. After
grouping according to types, we write equation (\ref{Pe5}) as,
\begin{multline} \label{Pe6}
P_{e} \leq \frac{1}{2^k}
\sum_{\bgamma}\alpha(\bgamma,k)\sum_{\bxi\in\cx^n}\sum_{\by\in\cy^n}P^{\bgamma,n}(\bxi)P_{\byn|\bxn}(\by|\bxi)
\\ \cdot \sum_{\blambda \in S_{\bgamma}(D)} \left( \beta(\blambda,k)\sum_{\bxj\in\cx^n} P^{\blambda,n}(\bxj|\bxi)\left( \frac{P_{\byn|\bxn}(\by|\bxj)}{P_{\byn|\bxn}(\by|\bxi)}\right)^{\frac{1}{1+\rho}} \right)^{\rho}
\end{multline}
where $\alpha(\bgamma,k)$ is the number of target vectors $\bVi$ of length $k$ and type
$\bgamma$, and where $\beta(\blambda,k)$ is the number of target vectors $\bVj$ of length
$k$ and joint type $\blambda$ with a target vector $\bVi$ of type $\bgamma$.
$S_{\bgamma}(D)$ is defined as
\begin{equation} \label{eqn.defSDarbit}
S_{\bgamma}(D) = \{\blambda: \lambda_{01}+\lambda_{10} \geq D,\
\lambda_{00}+\lambda_{01}=\bgamma_{0} , \ \lambda_{10}+\lambda_{11}= \bgamma_{1}\}
\end{equation}
Using standard results from the method of types \cite{Csiszar98} about the number of
binary vectors of a given type, we obtain the bound, $\alpha(\bgamma,k)\leq 2^{k
H(\bgamma)}$. The number of vectors with a given joint type is bounded as,
\begin{equation}
\beta(\blambda,k) = \binom{k\gamma_0}{k\lambda_{00}}\binom{k\gamma_1}{k\lambda_{11}} \leq
 2^{k(H(\blambda)-H(\bgamma))}
\label{eqn.jointcount_ind}
\end{equation}
Combining equation (\ref{Pe6}) with the bounds on $\alpha$ and $\beta$, and using the
conditional independence of sensor outputs, we obtain,
\begin{equation}  \label{Pe7}
P_{e}   \leq \sum_{\bgamma}\sum_{\blambda \in
S_{\bgamma}(D)}2^{-k(1-H(\bgamma))}2^{k\rho(H(\blambda)-H(\bgamma))}
2^{-nE(\rho,\blambda)}
\end{equation}
where $E(\rho,\blambda)$ is defined as below,
\begin{equation}
E(\rho,\blambda)=  -\log\Bigg(\sum_{a_i\in\cx} \sum_{b\in\cy}P^{\bgamma_i }(a_i)
P_{Y|X}(b|a_i)^{\frac{1}{1+\rho}}\Big(\sum_{a_j\in\cx}P^{\blambda}(a_j|a_i)P_{Y|X}(b|a_j)^{\frac{1}{1+\rho}}\Big)^{\rho}\Bigg)
\end{equation}
Since the number of types $\bgamma$ and joint types $\blambda$ are upper bounded by
$(k+1)^2$ and $(k+1)^4$ respectively, and $k=\lceil nR\rceil$, implying $k < nR+1$,
(\ref{Pe7}) is bounded as,
\begin{equation}
P_{e} \leq 2^{-n(o_1(n)+E_r(R,D))}
\end{equation}
where $o_1(n)\rightarrow 0$ as $n \rightarrow \infty$, and where $E_r(R,D)$ is defined
as,
\begin{equation}
E_r(R,D)  =  \min_{\bgamma}  \min_{\blambda \in S_{\bgamma}(D)} \max_{0 \leq \rho \leq 1}
\left(E(\rho,\blambda)+R(1-H(\bgamma)) -  \rho R (H(\blambda)-H(\bgamma))\right)
\end{equation}
The average error probability $P_e \rightarrow 0$ as $n \rightarrow \infty$ if $E_r(R,D)
>0$. Observing that $E(0,\blambda)=0 \ \forall \blambda$, we let $\rho$ go to zero,
rather than optimizing it, thus resulting in a lower bound on $E_r(R,D)$. In the above
expression, this implies that in order for $R$ to be achievable
$\frac{E(\rho,\blambda)}{\rho}+R\frac{1-H(\bgamma)}{\rho}-R(H(\blambda)-H(\bgamma))$ must
be positive for all types and joint types as $\rho \rightarrow 0$.

For $H(\bgamma) \neq 1$, $\frac{1-H(\bgamma)}{\rho}\rightarrow \infty$ as
$\rho\rightarrow 0$. For such a $\bgamma$, $P_e\rightarrow 0$ since $E_r(R,D)$ is
positive for all rates $R$. Since we seek to bound $R$ for which $E_r(R,D)$ is positive
for all types and joint types, we let $\bgamma=(0.5,0.5)$. This implies that as
$\rho\rightarrow 0$, $R$ is achievable when the derivative of $E(\rho,\blambda)$ with
respect to $\rho$ at $\rho=0$ is greater than $R(H(\blambda)-H(\bgamma))$. It can be
easily shown that, $\partial E(\rho,\blambda)/\partial \rho\big|_{\rho=0}=
D(P_{X_iY}^{\bgamma}\|Q_{X_iY}^{\blambda})$. Using this derivative in the analysis above,
we see that the achievable rates $R$ are bounded as below.
\begin{equation} \label{eqn.ITWRBound}
R \leq \min_{\begin{subarray}{c}\blambda \\ \lambda_{01}+\lambda_{10}>D \\
\lambda_{00}+\lambda_{01}=\gamma_0 \\
\lambda_{10}+\lambda_{11}=\gamma_1
\end{subarray}}\frac{D\left(P_{X_iY}^{\bgamma}\|Q_{{X_{i}Y}}^{\blambda}\right)}{H(\blambda)-H(\bgamma)}
\end{equation}
where $\bgamma=(0.5,0.5)$, and $\blambda$ is an arbitrary p.m.f. since $n\rightarrow
\infty$. Therefore, the right hand side of (\ref{eqn.ITWRBound}) is a lower bound on
$C(D)$. \end{proof}

\subsection{Numerical Results}

We compute the capacity bound $C_{LB}(D)$ in (\ref{theorem1}) for various distortions,
noise levels, and sensor ranges. A sensor of range $c$ is connected to $c$ target
positions. We assume that the sensing function $\Psi$ simply counts the number of target
positions in the sensor range with a target present. The sensor noise model assumes that
the probability of counting error decays exponentially with the error magnitude. In the
figures, `Noise = $p$' indicates that for a sensor, $P(Y \ne X) =p$, with ${\cal Y}={\cal
X}$ assumed. In Figure \ref{plot.RD_indplot}, we demonstrate $C_{LB}(D)$ for various
sensor noise levels and ranges. We compute this bound by systematically sampling the
space of possible $\blambda$. While $\blambda$ is a four-dimensional vector, because of
constraints we need to sample only two dimensions in order search over all valid
$\blambda$. In all cases, $C_{LB}(D)$ approaches $0$ as $D$ approaches $0$. This occurs
because similar target vectors have similar codewords due to dependence in the codeword
distribution. The relative magnitude of the bounds for sensors of various $c$ and noise
levels describes tradeoffs among sensor types that can be captured by our result. Some
tradeoffs are intuitive. For example, lower noise sensor of range $c$ have a higher
sensing capacity than higher noise sensors of the same range. Other tradeoffs are more
complex. For example the tradeoff between shorter and longer range sensors depends on the
desired distortion. Sensors of range $4$ and noise $0.10$ result in a higher sensing
capacity than sensors of range $2$ and noise $0.01$ for distortion above $0.047$. The
opposite is true for distortions below $0.047$. Thus, the bound presented in
(\ref{theorem1}) describes a complex tradeoffs between sensor noise, sensor range, and
the desired detection accuracy.

Figure \ref{plot.RN_indplot} shows $C_{LB}(D)$ at $D=0.1$ as a function of sensor noise
level for sensors of various range and sensing functions. This figure demonstrates that
the strategy of simple sensor replication, which is a popular practical method for
reducing error probability, can be inefficient. For example, for sensors of range $4$ and
a sum sensing function, a rate of $0.61$ is achievable at noise level $0.1$. If each
sensor with noise $0.1$ is replicated three times and majority decoding is used, the
noise can be reduced to $3\times (0.1)^2\times 0.9+(0.1)^3 = 0.028$. For a noise level of
$0.028$, $C_{LB}(0.1)$ equals $0.91$ for a sensor of range $4$ and a sum sensing
function. However, due to sensor replication, the rate is reduced to $0.91/3=0.303$. This
rate is significantly lower than the rate of $0.61$ for sensors of noise $0.1$ achievable
by using our random sensor network construction. Thus, the bound indicates that
cooperative sensor strategies can require significantly fewer sensor measurements than
sensor replication. Figure \ref{plot.RN_indplot} also shows $C_{LB}(D)$ at $D=0.1$ for
sensors with $c=4$ and a weighted sum sensing function with weights $\{1,0.5,0.25,0.1\}$.
This sensing function has a higher sensing capacity than sensors with the same range and
an un-weighted sum sensing function across all noise levels. We conjecture that this
occurs because a weighted sum can distinguish among more target configurations than an
un-weighted sum. Interestingly, the gap between the two sensing functions increases with
increasing noise.

Using the loopy belief propagation algorithm \cite{Pearl88} we empirically examined
sensor probability of error as a function of rate. We generated sensor networks of
various rates by setting the number of targets, and varying the number of sensors. We
chose the number of connections to be $c=4$, the distortion level to be $0.1$, and the
noise level to be $0.1$ (i.e. $P(Y \ne X) =0.1$, with ${\cal Y}={\cal X}$). As in the
previous section, we assume that the probability of error decays exponentially with error
magnitude. We empirically evaluated the average error rate obtained in decoding target
vectors in a randomly generated set of sensor networks. We plotted the average error rate
for each rate value, and for various numbers of targets as shown in Figure \ref{bp_plot}.
As the number of targets increase, the transition from high error to low error rate
becomes increasingly sharp. However, all the error curves are well below the capacity
value $C_{LB}(0.1)=0.62$. We conjecture that this occurs because belief propagation is
suboptimal for graphs with cycles.

\subsection{Extensions}
\label{sec.arbit.extensions} Section \ref{sec.arbitraryconnsmodel} introduced a sensor
network model where each sensor is allowed to make arbitrary connections to the target
vector. In several situations, more complex sensor network models may be necessary. This
section describes extensions of the arbitrary connection model. Extensions that account
for contiguity in sensor connections require a new model and are discussed in Section
\ref{sec.ipsn}. The first extension considers non-binary target vectors. Binary target
vectors indicate the presence or absence of targets at the spatial positions. A target
vector over a general finite alphabet may indicate, in addition to the presence of
targets, the class of a target. Alternatively, the entries of non-binary vectors can
indicate levels of intensity or concentration. Assuming a non-binary target vector, we
can define types and joint types over an alphabet $\cv$, and apply the same analysis as
before to obtain the sensing capacity bound below. 
\begin{equation}
 C(D) \geq C_{LB}(D) =
 \min_{\begin{subarray}{c}\blambda \\ \sum_{a\neq b}\lambda_{ab}\geq D \\
\sum_{b}\lambda_{ab}=\gamma_a
\end{subarray}} \frac{D\left(P_{X_iY}^{\bgamma}\|Q_{{X_{i}Y}}^{\blambda}\right)}{H(\blambda)-H(\bgamma
)}
\end{equation}
\textit{where ${\bgamma}=\left(\gamma_{a}=\frac{1}{|\cv|},a\in \cv \right)$, while
$\blambda = (\lambda_{ab},\ a,b\in \cv )$ is an arbitrary probability mass function.}

The second extension allows the following a priori distribution over target vectors.
Assume that each target position is generated i.i.d. with probability $P_V$ over the
alphabet $\cv$. This may model the fact that targets are sparsely present. The previous
analysis can be extended to a Maximum-a-Posteriori (MAP) detector, instead of the ML
detector considered earlier, resulting in the following sensing capacity bound.
\begin{equation}
 C(D) \geq C_{LB}(D) =
 \min_{\begin{subarray}{c}\blambda \\ \sum_{a\neq b}\lambda_{ab} \geq D \\
\sum_{b}\lambda_{ab}=\gamma_{ia}
\end{subarray}} \frac{D\left(P_{X_iY}^{\bgamma_i}\|Q_{{X_{i}Y}}^{\blambda}\right)}{H(\blambda)-H(\bgamma
_j)-D(\bgamma_j\|P_V)}
\end{equation}
\textit{where ${\bgamma_i}=P_V$, $\blambda = (\lambda_{ab},\ a,b\in \cv )$ is an
arbitrary probability mass function and $\bgamma_j$ is the marginal of $\blambda$
calculated as $\gamma_{jb}=\sum_{a} \lambda_{ab}$.}

A third extension accounts for heterogenous sensors, where each class of sensor possibly
has a different range $c$, noise model $P_{Y|X}$, and/or sensing function $\Psi$. Let the
sensor of class $l$ be used with a given relative frequency $\alpha_l$. For such a model
the sensing capacity bound is as follows.
\begin{equation}
C(D) \geq C_{LB}(D) =  \min_{\begin{subarray}{c}\blambda \\ \sum_{a\neq b}\lambda_{ab} \geq D \\
\sum_{b}\lambda_{ab}=\gamma_{ia}
\end{subarray}} \frac{\sum_l \alpha_l D\left(P_{X_iY}^{\bgamma_i,l}\|Q_{{X_{i}Y}}^{\blambda,l}\right)}{H(\blambda)-H(\bgamma
_j)-D(\bgamma_j\|P_V)}
\end{equation}
\textit{where ${\bgamma_i}=P_V$, $\blambda = (\lambda_{ab},\ a,b\in \cv )$ is an
arbitrary probability mass function and $\bgamma_j$ is the marginal of $\blambda$
calculated as $\gamma_{jb}=\sum_{a} \lambda_{ab}$.}

\section{Sensing Capacity of Contiguous Connections Model} \label{sec.ipsn}

In this section, we analyze the sensing capacity of a sensor network model that models
contiguity in a sensor's connections. Figure \ref{fig:dep_model} shows an example of such
a model. Sensor $\ell$ is connected to exactly $c$ \emph{contiguous} positions out of the
$k$ spatial positions. In contrast, the arbitrary connections model analyzed in the
previous section did not account for localized sensor observations since each sensor
could sense any $c$ (not necessarily contiguous) spatial positions.

\subsection{Higher Order Types}

The statement of the result for contiguous models requires higher order types
\cite{Csiszar98}. We introduce {\em circular c-order types} and {\em circular c-order
joint types}. We define the circular $c$-order type of a binary sequence (i.e. a target
vector) as a $2^c$ dimensional vector, $\bgamma$, where each entry in the vector
corresponds to the frequency of occurrence of one of the possible subsequences of length
$c$. A circular sequence is one in which the last element of the sequence precedes the
first element of the sequence. The total number of subsequences of length $c$ that can
occur in a circular sequence of length $k$ is $k$. For example, for a binary target
vector and $c=2$, $\bgamma =(\gamma_{00}, \gamma_{01},\gamma_{10}, \gamma_{11} )$. While
it is possible to prove our bound using non-circular types as shown in
\cite{RachlinIPSN05}, circular types lead to the same asymptotic result with the benefit
of significantly simpler notation. The notational simplicity arises out of the fact that
the lower order circular types are precise marginals of the higher order circular types.
Although all the types in this section are circular, we will omit the word `circular'
when referring to types in the remainder of this section for brevity.

We denote the set of all $c$-order types over the alphabet $\cv^c$ for target vectors of
length $k$ as $\cp_k(\cv^c)$. Since each sensor independently chooses a block of $c$
contiguous spatial positions, the distribution of its ideal output $X_{i}$ depends only
on the $c$-order type $\bgamma$ of the target vector $\bVi$ which occurs. For a sensing
function $\Psi$ and a target vector $\bVi$ of type $\bgamma$,
\begin{equation}
P_{X_i}(X_i=x)=\sum_{\begin{subarray}{c}\ba\in\cv^c\\\Psi(\ba)  =
x\end{subarray}}\gamma_{\ba} \doteq  P^{\bgamma }(x)
\end{equation}

Next, we note that the joint distribution $P_{\bxin\bxjn}$ depends on the {\em c-order
joint type} $\blambda$ of the $i^{th}$ and $j^{th}$ target vectors $\bVi,\bVj$.
$\blambda$ is the vector of $\lambda_{(\ba)(\bb)}$, the fraction of positions in
$\bVi,\bVj$ where $\bVi$ has a bit subsequence $\ba$ while $\bVj$ has a bit subsequence
$\bb$. For example, when $c=2$ and $\cv=\{0,1\}$,
$\blambda=(\lambda_{(00)(00)},\ldots,\lambda_{(11)(11)})$. We denote the set of all
c-order joint types over the alphabet $\cv^{c}$ for target vectors of length $k$ as
$\cp_k(\cv^{c},\cv^{c})$. Each $\blambda \in \cp_k(\cv^{c},\cv^{c})$ must satisfy the
normalization constraint that the sum over all entries of $\lambda$ equals one. Since the
joint type $\blambda$ also defines the type $\bgamma$ of $\bVi$, for all
$\{\ba\}\in\cv^c$ we must have $\gamma_{\ba} =\sum_{\bb\in\cv^c}\lambda_{(\ba)(\bb)}$.
Taking advantage of the fact that for circular types, lower order types are precise
marginals of higher order types, we denote $\lambda_{(a)(b)}=\sum_{\ba'\in\cv^{c-1}}
\sum_{\bb'\in\cv^{c-1}}\lambda_{(a\ba')(b\bb')}$. $\lambda_{(a)(b)}$ is the normalized
count of locations where target vector $i$ has value $a$ while target vector $j$ has
value $b$. Since each sensor depends only on the $c$ contiguous targets bits which it
senses, $P_{\bxin,\bxjn}$ depends only on the joint type $\blambda$. For target vectors
$\bVi$,$\bVj$ of c-order joint type $\blambda$,
\begin{equation}
P_{X_iX_j}(X_i=x_i,X_j=x_j)= \sum_{\begin{subarray}{c}\ba,\bb\in\cv^c\\
\Psi(\ba)=x_i,\  \Psi(\bb)=x_j
\end{subarray}}\lambda_{(\ba)(\bb)} \doteq P^{\blambda}(x_{i},x_{j}) 
\end{equation}

For example, for binary target vectors and $c=2$, vectors $00000000, 01101000, 01000111$
have $\bgamma = (1,0,0,0), (3/8,2/8,2/8,1/8), (2/8,2/8,2/8,2/8)$ respectively. Table
\ref{tab:lambda} contains the 2-order joint type of two target vectors. Consider a sensor
network where each sensor is randomly connected to $c=2$ contiguous spatial positions. We
assume that $\Psi$ outputs the number of targets which the sensor observes. Thus, each
sensor has an ideal output alphabet $\cx=\{0,1,2\}$. For target vectors of type
$\bgamma$, $P(X_i=0) =\gamma_{00}, P(X_i=1)=\gamma_{01}+\gamma_{10},P(X_i=2)=\gamma_{11}$
respectively. Given two target vectors $\bVi,\bVj$ of joint type $\blambda$, a sensor
will output `0' for both target vectors only if both of its connections see a `0' bit in
both target vectors. This happens with probability $\lambda_{(00)(00)}$. Table
\ref{tab:XiXj} lists the joint p.m.f. $P_{X_iX_j}(x_{i},x_{j})= P^{\blambda}( x_{i},
x_{j} )$ for all output pairs $x_i,x_j$ corresponding to joint type $\blambda$. The table
shows that $X_i,X_j$ are not independent, in general.

To prove Theorem \ref{theorem1}, we bounded the number of target vectors $\bVj$ that have
a given joint type with a target vector $\bVi$ in equation (\ref{eqn.jointcount_ind}). To
prove a sensing capacity bound for the contiguous connections model we prove a bound on
the number of target vectors $\bVj$ that have a joint c-order type $\blambda$ with a
target vector of $c$-order type $\bgamma$ in the lemma below. Before proceeding, we
introduce the following notation. The set of length $k$ target vectors of $c$-order type
$\bgamma$ is denoted $\ct_{\bgamma}^{k}$. The set of pairs of length $k$ target vectors
of joint type $\blambda$ is denoted $\ct_{\blambda}^{k}$. The set of length $k$ target
vectors that have joint c-order type $\blambda$ with a given vector of type $\bgamma$, is
denoted $\ct_{\blambda|\bgamma}^{k}$.

\begin{lemma}[Bound on $|\ct_{\blambda|\bgamma}^{k}|$ ] \label{lemma:countingtypes}
The number of binary vectors of length $k$ with c-order joint type $\blambda$ for a given
vector of $c$-order type $\bgamma$, denoted $|\ct_{\blambda|\bgamma}^{k}|$, is bounded as
follows
\begin{equation}
|\ct_{\blambda|\bgamma}^{k}|\leq C(k)2^{k(H(\tblambda|\blambda')-H(\tbgamma|\bgamma'))}
\end{equation}
$C(k)=2^{2(c-1)}k^{2^{c-1}}(k+1)^{2^{2(c-1)}}$ and
$\blambda'=\{\lambda_{(\ba)(\bb)},\forall \ba,\bb \in \cv^{c-1}\}$ is a probability mass
function defined as $\lambda_{(\ba)(\bb)}'=\sum_{a,b\in\cv}\lambda_{(\ba a)(\bb b)}$.
$\tblambda=\{\tlambda_{(\ba a)(\bb b)},\forall\ba,\bb \in \cv^{c-1},\forall a,b\in\cv \}$
is a conditional probability mass function defined as $\tlambda_{(\ba a)(\bb
b)}=\frac{\lambda_{(\ba a)(\bb b)}}{\lambda_{(\ba)(\bb)}}$. $\bgamma'=\{\gamma_{\ba}',
\forall\ba\in\cv^{c-1}\}$ is probability mass function defined as
$\gamma_{\ba}'=\sum_{a\in\cv}\gamma_{\ba a}$. $\tbgamma=\{\tgamma_{\ba
a},\forall\ba\in\cv^{c-1},\forall a\in\cv\}$ is a conditional probability mass function
defined as $\tgamma_{\ba a}=\frac{\gamma_{\ba a}}{\gamma_{\ba}}$.
\end{lemma}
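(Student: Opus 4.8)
The plan is to count the admissible $\bVj$ by viewing the pair $(\bVi,\bVj)$ as a single circular sequence over the product alphabet $\cv\times\cv$, so that $\ct_{\blambda|\bgamma}^{k}$ is exactly the set of product sequences whose $c$-order type is $\blambda$ and whose first coordinate is the fixed vector $\bVi\in\ct_{\bgamma}^{k}$. I would realize every such product sequence as a closed walk (an Eulerian circuit with prescribed edge multiplicities $k\lambda_{(\ba a)(\bb b)}$) in the de Bruijn graph whose nodes are the joint $(c-1)$-grams $(\ba,\bb)\in\cv^{c-1}\times\cv^{c-1}$ and whose edges are the joint $c$-grams. This is the natural extension of the method of types for higher-order (Markov) types from \cite{Csiszar98}, and it already explains the shape of $C(k)$: the graph has $2^{2(c-1)}$ nodes, the number of joint $(c-1)$-types to be enumerated is bounded by $(k+1)^{2^{2(c-1)}}$, and the arborescence (BEST-theorem) correction together with the boundary $(c-1)$-gram choices contribute the $k^{2^{c-1}}$ and $2^{2(c-1)}$ factors.

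First I would handle the two unconstrained counts. Applying circular Markov-type counting to the product alphabet gives $|\ct_{\blambda}^{k}|\le \mathrm{poly}(k)\,2^{k\,H(\tblambda|\blambda')}$, and applying it to $\cv$ alone gives $|\ct_{\bgamma}^{k}|\ge 2^{k\,H(\tbgamma|\bgamma')}/\mathrm{poly}(k)$; both follow because, under the order-$(c-1)$ Markov measure matched to the conditional type, every sequence of a fixed type receives the same probability $2^{-k\,H(\cdot|\cdot)}$ up to a $2^{O(c)}$ boundary factor, and these probabilities sum to at most one. Conceptually, since the $c$-order type $\blambda$ forces the first coordinate to have type $\bgamma$, summing fibers gives $\sum_{\bVi'\in\ct_{\bgamma}^{k}}|\ct_{\blambda|\bgamma}^{k}(\bVi')|=|\ct_{\blambda}^{k}|$, so the \emph{average} fiber size is $|\ct_{\blambda}^{k}|/|\ct_{\bgamma}^{k}|\le \mathrm{poly}(k)\,2^{k(H(\tblambda|\blambda')-H(\tbgamma|\bgamma'))}$, which is exactly the target. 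The telescoping form of the exponent, $[H(\blambda)-H(\bgamma)]-[H(\blambda')-H(\bgamma')]$, together with the exact-marginal property of circular types (lower-order types are precise marginals of higher-order types), is what makes this ratio collapse cleanly; it also suggests an equivalent induction on $c$ whose base case $c=1$ is precisely the product-of-binomials count in (\ref{eqn.jointcount_ind}).

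The step I expect to be the main obstacle is promoting this \emph{average} bound to one valid for the fixed vector $\bVi$, since a priori different $\bVi'$ of the same type $\bgamma$ can support different numbers of compatible $\bVj$. A naive construction of $\bVj$ along the known $\bVi$-walk yields only the weaker exponent $H(v\mid U,u,V)$, obtained by distributing the free $\bVj$-symbol $v$ over positions grouped by the joint $(c-1)$-gram context $(\ba,\bb)$ and the forced $\bVi$-symbol $a$; this overshoots the claimed exponent by exactly the conditional mutual information $I(u;V\mid U)\ge 0$ between the next $\bVi$-symbol and the current $\bVj$-state, which vanishes only when the codewords are uncorrelated. The sharper exponent must therefore come from the global flow-conservation (closure) constraints on the $\bVj$-walk, which the per-position multinomial count discards. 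My plan to close this gap is to count directly the constrained Eulerian circuits of the product graph that project onto the fixed $\bVi$-circuit, via a transfer-matrix/BEST-theorem computation in which the product of multinomials supplies $H(\tblambda|\blambda')$, the fixed first-coordinate projection removes exactly the $\bVi$-coordinate's own entropy $H(\tbgamma|\bgamma')$, and the arborescence determinant and boundary terms — polynomial in $k$ because the graph has only $2^{2(c-1)}$ nodes — are absorbed into $C(k)$. Verifying that this projection removes precisely $H(\tbgamma|\bgamma')$, and not less, uniformly over all $\bVi\in\ct_{\bgamma}^{k}$ is the delicate point, and is exactly where the exact-marginal property of circular types does the essential work.
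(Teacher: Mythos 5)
Your framing and your unconstrained counts are the same as the paper's: the paper likewise converts circular $c$-order (joint) types into circular $2$-order Markov types over alphabets of cardinality $2^{c-1}$ and $2^{2(c-1)}$, and cites Davisson's circular-Markov-type bounds to get $|\ct_{\bgamma}^{k}|\geq C_1(k)\,2^{kH(\tbgamma|\bgamma')}$ with $C_1(k)=k^{-2^{c-1}}(k+1)^{-2^{2(c-1)}}$, and $|\ct_{\blambda}^{k}|\leq C_2\,2^{kH(\tblambda|\blambda')}$ with $C_2=2^{2(c-1)}$ (your matched-Markov-measure argument establishes the same two facts, which is fine). The genuine gap is exactly where you flag it: your ratio argument only bounds the \emph{average} fiber size over $\bVi'\in\ct_{\bgamma}^{k}$, while the lemma --- as it is used to bound $\beta(\blambda,k)$ in the proof of Theorem \ref{theorem2} --- must hold for each fixed conditioning vector $\bVi$. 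Your proposed repair (a transfer-matrix/BEST-theorem count of Eulerian circuits of the product de Bruijn graph that project onto the fixed $\bVi$-circuit) is never carried out; you state explicitly that the decisive claim, that the projection removes exactly $H(\tbgamma|\bgamma')$ uniformly over all $\bVi\in\ct_{\bgamma}^{k}$, remains unverified. So what you have is a correct proof of an average-case bound plus a plan for the pointwise one, not a proof of the lemma.

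The single missing idea --- and it is precisely how the paper closes this step --- is that for circular types the fiber size is \emph{constant} over the conditioning type class: $|\ct_{\blambda|\bgamma}^{k}|$ depends only on $\bgamma$, not on which particular $\bVi\in\ct_{\bgamma}^{k}$ is fixed. Once this exchangeability is granted, your own identity $\sum_{\bVi'\in\ct_{\bgamma}^{k}}|\ct_{\blambda|\bgamma}^{k}(\bVi')|=|\ct_{\blambda}^{k}|$ upgrades the average to the exact equality $|\ct_{\blambda|\bgamma}^{k}|=|\ct_{\blambda}^{k}|/|\ct_{\bgamma}^{k}|$, valid for every $\bVi$, and the two Davisson bounds immediately give the lemma with $C(k)=C_1^{-1}(k)\,C_2$; no arborescence determinant or flow-conservation analysis is needed. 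Your instinct that this point is delicate is sound --- the paper asserts the constancy as an observation rather than proving it, and unlike the $c=1$ case it does not follow from simple permutation symmetry, since the symmetry group of a circular $c$-order type class is much smaller --- but if you want to finish along your route, proving that constancy claim directly (e.g., by exhibiting a fiber bijection between any two conditioning vectors of the same type) is a much shorter path than pushing the BEST-theorem computation through.
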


\begin{proof}
To bound $|\ct_{\blambda|\bgamma}^{k}|$, we begin by bounding $|\ct_{\bgamma}^{k}|$. The
$c$-order type  $\bgamma$ of a binary vector specifies a 2-order type (referred to as a
Markov type) of a vector who entries are in an alphabet of cardinality $2^{c-1}$.
Consider the vector $011010001$. Denoting a pair of bits using $\{0,1,2,3\}$, as we move
from left to right over this vector, one bit at a time, the sequence obtained is
$132120012$. The 3-order type $\bgamma$ specifies the 2-order type over this new vector.
For example, the fraction of times $1$ transitions to $3$ is equal to $\gamma_{011}$ and
the fraction of times $1$ transitions to $2$ is equal to $\gamma_{010}$. Any c-order type
over a binary sequence can thus be mapped to a 2-order type over a sequence with symbols
in an alphabet of cardinality $2^{c-1}$. \cite{Davisson81} proves bounds on the number of
sequences that correspond to a 2-order circular type over a sequence with an alphabet
$\cv$. Given our mapping from a c-order type to a 2-order type, we can apply this result
to obtain the following bound,
\begin{equation} \label{eqn:MarkovTypeCount}
|\ct_{\bgamma}^{k}|\geq
C_1(k)2^{k(H(\bgamma)-H(\bgamma'))}=C_1(k)2^{kH(\tbgamma|\bgamma')}
\end{equation}
where $C_1(k)=k^{-2^{c-1}}(k+1)^{-2^{2(c-1)}}$. We now bound $|\ct_{\blambda}^{k}|$ using
a similar argument. The c-order joint type $\blambda$ of a pair of binary vectors
specifies a 2-order type of a single vector whose entries are symbols from an alphabet of
cardinality $2^{2(c-1)}$. We consider an example for a 3-order joint type, and for
vectors $\bV=011010001$ and $\bV'=101011011$. We can rewrite these vectors as a single
vector whose entries at location $i$ are defined by the pair of entries $v_i,v_i'$ and
the subsequent pair of entries $v_{i+1},v_{i+1}'$. These four entries, combined as $v_i
v_{i+1} v_i' v_{i+1}'$ are mapped to a symbol in an alphabet of cardinality $2^4$ by
reading the entries as a binary number (i.e. $0000=0$, $0001=1$, $\ldots$). In this
manner, $\bV$ and $\bV'$ are mapped to a vector $(6,14,10, 9,11,2,1,7,11)$. The 3-order
joint type $\blambda$ specifies the 2-order type over this new vector. For example, the
fraction of times $1$ transitions to $7$ is equal to $\lambda_{(001)(011)}$ and the
fraction of times $2$ transitions to $1$ is equal to $\lambda_{(000)(101)}$. Any c-order
joint type over a binary sequence can thus be mapped to a 2-order type over a sequence
with symbols in an alphabet of cardinality $2^{2(c-1)}$. We use the results of
\cite{Davisson81} again. Given our mapping from a c-order joint type to a 2-order type,
we can apply this result to obtain the following bound,
\begin{equation} \label{eqn:MarkovJointTypeCount}
|\ct_{\blambda}^{k}|\leq C_2 2^{k(H(\blambda)-H(\blambda'))}=C_2
2^{kH(\tblambda|\blambda')}
\end{equation}
where $C_2=2^{2(c-1)}$. We observe that $|\ct_{\blambda|\bgamma}^{k}|$ depends only on
the type $\bgamma$ of the vector on which we are conditioning, and not on the actual
vector. Therefore,
$|\ct_{\blambda|\bgamma}^{k}|=\frac{|\ct_{\blambda}^{k}|}{|\ct_{\bgamma}^{k}|}$. Using
equations (\ref{eqn:MarkovTypeCount}) and (\ref{eqn:MarkovJointTypeCount}), we obtain the
following bound,
\begin{equation}
|\ct_{\blambda|\bgamma}^{k}|\leq C(k)2^{k(H(\tblambda|\blambda')-H(\tbgamma|\bgamma'))}
\end{equation}

where $C(k)=C_{1}^{-1}(k)C_2$

\end{proof}

\subsection{Sensing Capacity Lower Bound}

We define $P_{{X_{i}Y}}^{\bgamma}$ $Q_{{X_{i}Y}}^{\blambda}$ as they were defined for the
arbitrary connections model bounds, with the only difference arising to the use of
$c$-order types instead of types.

\begin{theorem}[Sensing Capacity Theorem for the Contiguous Connections Model] \label{theorem2}
The sensing capacity at distortion $D$ satisfies,
\begin{equation}
C(D) \geq C_{LB}(D) =
\min_{\begin{subarray}{c}\blambda \\
 \lambda_{(0)(1)}+\lambda_{(1)(0)}\geq D \\
\end{subarray}}\frac{D\left(P_{X_iY}^{\bgamma}\|Q_{{X_{i}Y}}^{\blambda}\right)}{H(\tblambda|\blambda')-H(\tbgamma|\bgamma')}
\end{equation}
where $\blambda \in \cp(\{0,1\}^{c},\{0,1\}^{c})$,
$\bgamma_{\ba}=\sum_{\bb\in\{0,1\}^{c}}\lambda_{(\ba)(\bb)}$, and
$H(\tbgamma|\bgamma')=1$.
\end{theorem}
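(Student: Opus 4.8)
The plan is to mirror the random-coding argument used to prove Theorem~\ref{theorem1}, replacing the ordinary types and joint types by the $c$-order circular types and $c$-order joint types introduced above, and substituting the counting bound of Lemma~\ref{lemma:countingtypes} for equation~(\ref{eqn.jointcount_ind}). As before, I would fix the maximum-likelihood detector $g_{\text{ML}}(\by)=\argmax_j P_{\byn|\bxn}(\by|\bxj)$, average the error probability over the randomly generated contiguous sensor network, partition $\cd_{\bVcap}^{C}$ into classes indexed by the $c$-order joint type $\blambda$, and apply the union bound together with Jensen's inequality and the concavity of $x^{\rho}$ for $\rho\in[0,1]$. This reproduces the analog of (\ref{Pe5}), in which every distribution is still a product over the $n$ sensors because each sensor independently selects a contiguous block; crucially, conditioned on a fixed target vector the per-sensor law $P^{\bgamma}$ and the pairwise law $P^{\blambda}$ depend only on the $c$-order type and $c$-order joint type, exactly as established in the preceding subsection, so the per-letter exponent $E(\rho,\blambda)$ stays single-letter.

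The key departure from Theorem~\ref{theorem1} is the grouping step. I would group the outer sum over $i$ by the $c$-order type $\bgamma$ and the inner sum over $j$ by the $c$-order joint type $\blambda$ whose $\bVi$-marginal is $\bgamma$ and which lies in $S_{\bgamma}(D)=\{\blambda:\lambda_{(0)(1)}+\lambda_{(1)(0)}\geq D\}$; since $\lambda_{(0)(1)}+\lambda_{(1)(0)}=\tfrac{1}{k}d_{\text{H}}(\bVi,\bVj)$, this first-order marginal condition faithfully encodes $\cd_{\bVi}^{C}$. The number of $c$-order types and joint types is still polynomial in $k$ (at most $(k+1)^{2^{c}}$ and $(k+1)^{2^{2c}}$), so regrouping costs only a subexponential factor. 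In place of $\beta(\blambda,k)\leq 2^{k(H(\blambda)-H(\bgamma))}$ I would invoke Lemma~\ref{lemma:countingtypes}, whose quantity $|\ct_{\blambda|\bgamma}^{k}|$ is exactly the count of vectors $\bVj$ sharing joint type $\blambda$ with a fixed $\bVi$ of type $\bgamma$, and in place of $\alpha(\bgamma,k)\leq 2^{kH(\bgamma)}$ I would use the matching upper-bound companion of (\ref{eqn:MarkovTypeCount}), $|\ct_{\bgamma}^{k}|\leq C(k)2^{kH(\tbgamma|\bgamma')}$, from the same Davisson Markov-type counting. Substituting these into the analog of (\ref{Pe6}) yields the contiguous counterpart of (\ref{Pe7}),
\begin{equation*}
P_{e}\leq\sum_{\bgamma}\sum_{\blambda\in S_{\bgamma}(D)}2^{-k(1-H(\tbgamma|\bgamma'))}\,2^{k\rho(H(\tblambda|\blambda')-H(\tbgamma|\bgamma'))}\,2^{-nE(\rho,\blambda)}.
\end{equation*}

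From here the analysis is identical to Theorem~\ref{theorem1}. Absorbing the polynomially many type classes into a vanishing term $o_1(n)$ and using $k=\lceil nR\rceil$, I would require the resulting error exponent to be positive, let $\rho\to0$ (legitimate since $E(0,\blambda)=0$), and use $\partial E(\rho,\blambda)/\partial\rho\big|_{\rho=0}=D(P_{X_iY}^{\bgamma}\|Q_{X_iY}^{\blambda})$. The worst-case type is now read off the exponent $1-H(\tbgamma|\bgamma')$: whenever $H(\tbgamma|\bgamma')<1$ the term $R(1-H(\tbgamma|\bgamma'))/\rho\to+\infty$ makes the exponent positive for every rate, so such types impose no constraint, and the binding minimization runs only over $\blambda$ with $H(\tbgamma|\bgamma')=1$ --- the exact contiguous-model analog of forcing $\bgamma=(0.5,0.5)$ in Theorem~\ref{theorem1} (indeed, for $c=1$ the condition $H(\tbgamma|\bgamma')=1$ reduces to $H(\bgamma)=1$). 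This produces the stated bound with denominator $H(\tblambda|\blambda')-H(\tbgamma|\bgamma')$.

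I expect no single conceptual obstacle in the theorem proof itself: the genuinely hard combinatorial step, counting binary vectors of a prescribed higher-order joint type, has already been discharged in Lemma~\ref{lemma:countingtypes}, and the conditional independence of sensor outputs given a target vector was settled when the $c$-order laws were defined. The points requiring care are therefore the bookkeeping ones: checking that the conditional-entropy count of Lemma~\ref{lemma:countingtypes} slots cleanly into the type-grouped union bound (together with the companion upper bound on $|\ct_{\bgamma}^{k}|$), and confirming that the set of admissible $\blambda$ with $H(\tbgamma|\bgamma')=1$, i.e.\ uniform conditional $c$-order statistics, is nonempty and compatible with the distortion constraint $\lambda_{(0)(1)}+\lambda_{(1)(0)}\geq D$, so that the minimization is well posed and coincides with the dominating-error regime.
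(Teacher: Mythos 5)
Your proposal is correct and takes essentially the same approach as the paper: the paper's own proof outline does exactly what you describe --- rerun the Theorem~\ref{theorem1} random-coding/union-bound argument with $c$-order types and joint types, bound $\alpha(\bgamma,k)=|\ct_{\bgamma}^{k}|\leq 2^{kH(\tbgamma|\bgamma')}$ and $\beta(\blambda,k)=|\ct_{\blambda|\bgamma}^{k}|$ via Lemma~\ref{lemma:countingtypes}, and let $\rho\to 0$ so that only types with $H(\tbgamma|\bgamma')=1$ constrain the achievable rate. Your additional bookkeeping points (the polynomial type-counting prefactors being absorbed into the vanishing exponent, and the first-order marginal $\lambda_{(0)(1)}+\lambda_{(1)(0)}$ encoding the Hamming distortion) are consistent with what the paper does implicitly.
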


If we specialize this result to the case of $c=1$, this theorem provides a bound that
coincides with our bound for the arbitrary connections model. The proof of the sensing
capacity lower bound is similar for the arbitrary and contiguous connections models. The
main differences in the proofs arise due to the contiguity of sensor field of view,
which necessitates the use of $c$-order types. 
Extensions demonstrated in Section \ref{sec.arbit.extensions} for the arbitrary
connections model can be easily applied to the contiguous connections model.

\emph{Proof Outline: } The proof of Theorem \ref{theorem2} is essentially identical to
the proof of Theorem \ref{theorem1}, with types and joint types replaced by $c$-order
types and joint types. The use of these higher order types requires counting arguments
described in Lemma \ref{lemma:countingtypes}. For $c$-order types, we bound $\alpha$ in
equation (\ref{Pe6}) as follows,
\begin{equation}
\alpha(\bgamma,k)=|\ct_{\bgamma}^{k}|\leq 2^{k H(\tbgamma|\bgamma')}
\end{equation}
For c-order joint types, we bound $\beta(\blambda,k)=|\ct_{\blambda|\bgamma}^{k}|$ in
equation (\ref{Pe6}) using Lemma \ref{lemma:countingtypes}. The set $S_{\bgamma}(D)$ is
defined as,
\begin{equation} \label{eqn.defSDcontig}
S_{\bgamma}(D) = \Bigg\{\blambda: \lambda_{0}+\lambda_{1} \geq D,\
\bgamma_{\ba}=\sum_{\bb\in\{0,1\}^{c}}\lambda_{(\ba)(\bb)}\Bigg\}
\end{equation}
Given these new bounds and definitions, and the substitution of c-order types for types,
the proof of Theorem \ref{theorem1} can be applied directly to prove Theorem
\ref{theorem2}.

\subsection{Numerical Results} \label{sec:ipsn_bounds}

In Figure \ref{indvsdep}, we compare $C_{LB}(D=0.025)$ for sensor networks with localized
(i.e. contiguous connections model) and non-localized (i.e. arbitrary connections model)
sensing. We assume that the sensing function $\Psi$ is a weighted additive function, with
weights $\{1,0.5,0.25,0.1\}$ for $c=4$ and $\{1,0.5,0.25\}$ for $c=3$. The sensor noise
model used throughout this section assumes that the probability of error decays
exponentially with the error magnitude. In the figures, `Noise = $p$' indicates that for
a sensor, $P(Y \ne X) =p$, with ${\cal Y}={\cal X}$ assumed. Contiguous sensor field of
view causes a significant reduction in sensing capacity. We conjecture that this effect
is similar to the inferior performance of channel codes that have finite memory, such as
convolutional codes, as opposed to LDPC codes. Further, it is interesting to note that
the gap in sensing capacity between sensors of range $c=3$ and $c=4$ is larger for the
arbitrary connections model than the contiguous connections model.

To compute the bound shown in Theorem \ref{theorem2}, we solve a sequence of convex
optimization problems. Rather than computing the bound directly, we find the largest $R$
for which the minimum of
$f(\blambda)=D\left(P_{X_iY}^{\bgamma}\|Q_{{X_{i}Y}}^{\blambda}\right)-R(H(\tblambda|\blambda')-H(\tbgamma|\bgamma'))$
over all valid $\blambda$ is greater than $0$. Minimizing $f(\blambda)$ is a convex
optimization problem since $f(\blambda)$ is convex in $\blambda$ and the set of valid
$\blambda$ is convex. Since $H(\tbgamma|\bgamma')=1$, the convexity of $f(\blambda)$ in
$\blambda$ can be proven using the log-sum inequality and the concavity of entropy.

\subsection{Extension to Two-dimensional Fields} \label{sec.2dfields}

The sensing capacity bounds obtained in this section can be extended from discrete target
vectors to two dimensional `target fields.' This extension requires the introduction of
two dimensional types. Such types are histograms over the set of possible two dimensional
patterns. We first analyzed the sensing capacity for a two-dimensional contiguous
connections model in \cite{RachlinISIT05}.

Figure \ref{2d_model} shows an example of our sensor network model. The state of the
environment is modeled as a $k\times k$ grid with $k^2$ spatial positions. Each discrete
position may contain no target or one target, and therefore the target configuration is
represented by a $k^2$-bit target field $\bF$. The possible target fields are denoted
$\bFi$, $i\in\{1,\ldots,2^{k^2}\}$. Target fields occur with equal probability. The
sensor network has $n$ identical sensors. Sensor $\ell$ located at grid block $F_h$
senses a set of contiguous target positions within a Euclidean distance $c$ of its grid
location (though this approach can be extended to other sensor coverage models). Circular
boundary conditions are assumed. Figure \ref{2d_model} depicts sensors with range $c=1$.
Each sensor outputs a value $x \in {\cal X}$ that is an arbitrary function of the targets
which it senses, $x=\Psi(\{f_{v}: v\in\cs_{c,h}\})$, where $\cs_{c,h}$ is the coverage of
a sensor located at grid block $F_h$ with range $c$. Since the number of targets sensed
by a sensor depends only on the sensor range, we write the number of targets in a
sensor's coverage as $|\cs_c|$. We assume a simple model for randomly generating sensor
networks, where each sensor chooses a region of Euclidean radius $c$ with equal
probability among the set of possible regions of radius $c$. This would occur, for
example, if sensors were randomly dropped on a field. All definitions from the
one-dimensional contiguous model extend directly, with target vectors $\bV$ replaced by
fields $\bF$. The rate is defined as $R=\frac{k^2}{n}$.

For a sensor located randomly in the target field, the probability of a sensor producing
a value depends on the number of target patterns that correspond to that value in the
sensor's range, and thus, can be written as a function of the frequency of patterns in
the field. The two-dimensional type $\bgamma_i$ is a vector that corresponds to the
normalized counts over the set of possible target configurations in the sensor's field of
view in a field $\bF_i$. For a sensor of range $c$, $\bgamma_i$ is a $2^{|\cs_c|}$
dimensional vector, where each entry in the vector $\bgamma_i$ corresponds to the
frequency of occurrence of one of the possible $|\cs_c|$ bit patterns. The set of sensor
types $\bgamma$ of a $k\times k$ field is denoted $\cp_{k}^2(\{0,1\}^{|\cs_c|})$.
$\gamma_{(0)}$ and $\gamma_{(1)}$ are the number of zeros and ones respectively in a
vector of type $\bgamma$. These quantities can be directly computed from $\bgamma$.

Next, we note that for sensor of range $c$ the conditional probability $P_{\bxin\bxjn}$
depends on the two-dimensional joint type $\blambda$ of the $i^{th}$ and $j^{th}$ target
fields $\bFi,\bFj$. For $\ba,\bb\in\{0,1\}^{|\cs_c|}$, $\blambda$ is the matrix of
$\lambda_{(\ba)(\bb)}$, the fraction of positions in $\bFi,\bFj$ where $\bFi$ has a
target pattern $\ba$ while $\bFj$ has a target pattern $\bb$. We denote the set of all
joint sensor types for sensors of range $c$ observing a target field of area $k^2$, as
$\cp_{k}^2(\{0,1\}^{|\cs_c|},\{0,1\}^{|\cs_c|})$. Since the output of each sensor depends
only on the contiguous region of targets which it senses, $P_{\bxin\bxjn}$ depends only
on $\blambda$ (discussed in Section \ref{sec:itw_defs}). $\lambda_{(1)(0)}$ is the number
of grid locations where field $i$ has a target and field $j$ does not, and can be
computed directly from $\blambda$. $\lambda_{(0)(1)}$ is similarly defined and computed.

Using the definitions of two dimensional types in the definitions of
$P_{X_iY}^{\bgamma_i}$ and $Q_{{X_{i}Y}}^{\blambda}$ from the one-dimensional contiguous
connections model, we can prove the following bound for sensing a two-dimensional field.
The sensing capacity at distortion $D$ satisfies,
\begin{equation}
C(D) \geq C_{LB}(D) =
\min_{\begin{subarray}{c}\blambda \\
\lambda_{(0)(1)}+\lambda_{(1)(0)}\geq D
\end{subarray}}\frac{D\left(P_{X_iY}^{\bgamma_i}\|Q_{{X_{i}Y}}^{\blambda}\right)}{H((\gamma_{j(0)},\gamma_{j(1)}))}
\end{equation}
where $\bgamma_{i},\bgamma_{j}\in\cp^{2}(\{0,1\}^{|\cs_c|})$, $\gamma_{i(0)}=0.5$ and
$\gamma_{i(1)}=0.5$, and $\blambda \in \cp^{2}(\{0,1\}^{|\cs_c|},\{0,1\}^{|\cs_c|})$.

\emph{Proof Outline: } The proof is essentially identical to the proof of Theorem
\ref{theorem1}, with types and joint types replaced by two-dimensional types and joint
types. For two-dimensional types, we bound $\alpha$ as follows,
\begin{equation}
\alpha(\bgamma_i,k)\leq 2^{k^2 H((\gamma_{i(0)},\gamma_{i(1)}))}
\end{equation}
For two-dimensional joint types, we bound $\beta$ as,
\begin{equation}
\beta(\blambda,k)\leq 2^{k^2 H((\gamma_{j(0)},\gamma_{j(1)}))}
\end{equation}
The bounds on $\alpha$ and $\beta$ are loose, and the authors are not aware of tighter
combinatorial bounds for two-dimensional types. The set $S_{\bgamma}(D)$ is defined as in
equation (\ref{eqn.defSDcontig}). Given these new bounds and definitions, and the
substitution of 2D types for types, the proof of Theorem \ref{theorem1} can be applied
directly to prove this result.

\section{Conclusions and Discussion}
\label{sec.concs}

The results presented in this paper provide limits on the accuracy of sensor networks for
large-scale detection applications. These results are obtained by drawing on an analogy
between channel coding and sensor networks. We define the sensing capacity and lower
bound it for several sensor network models. For all rates below the sensing capacity,
detection to within a desired accuracy with arbitrarily small error is achievable. This
threshold behavior contrasts with classical detection problems, where probability of
error goes to zero as the number of sensor measurements go to infinity while the number
of hypotheses remains fixed \cite{Cover91}. The sensing capacity captures complex sensor
tradeoffs. For example, our bounds show that the efficiency of using long range, noisy
sensors or shorter range, less noisy sensors depends on the desired detection accuracy.
Further, our results show that the mutual information is not the correct notion of
information for large-scale detection problems. This has implications for the problem of
sensor selection due to the popularity of `information gain' as a sensor selection
metric.

An important contribution of this paper is its demonstration of a close connection
between sensor networks and communication channels. It is thought-provoking to consider
that one could apply insights from the large body of work available for communication
channels to the sensor network setting. For example, channel coding theory contains a
large number of results that are used to build practical communication systems. Can we
fruitfully apply ideas from coding theory to sensor networks? To demonstrate the
potential benefit of a channel coding perspective, in \cite{RachlinIPSN06,
RachlinMILCOM06} we proposed extending ideas from convolutional coding to sensor
networks. We demonstrated that a version of sequential decoding (a low complexity
decoding heuristic for convolutional codes) can be applied to detection in sensor
networks, as an alternative to the belief propagation algorithm. Our empirical results
indicate that above a certain number of sensor measurements, the sequential decoding
algorithm achieves accurate decoding with bounded computations per bit (target position).
This empirical result suggests the existence of a `computational cut-off rate', similar
to one that exists for channel codes.

Our work on the theory of sensing points to a large set of open problems on large-scale
detection. Obvious directions include strengthening the theory by considering alternative
settings of the problem, tightening the sensing capacity bounds, and proving a converse
to sensing capacity. For example, we presented extensions to the work presented in this
paper by considering the impact of spatial \cite{RachlinISIT05} and temporal
\cite{RachlinALLERTON06} dependence on the sensing capacity. Another direction for future
work is to explore the connection between sensor networks and communication channels,
including the exploitation of existing channel codes to design sensor networks.

\bibliographystyle{IEEEtran}
\bibliography{initial_submission}

\begin{thebibliography}{10}
\providecommand{\url}[1]{#1}
\csname url@rmstyle\endcsname
\providecommand{\newblock}{\relax}
\providecommand{\bibinfo}[2]{#2}
\providecommand\BIBentrySTDinterwordspacing{\spaceskip=0pt\relax}
\providecommand\BIBentryALTinterwordstretchfactor{4}
\providecommand\BIBentryALTinterwordspacing{\spaceskip=\fontdimen2\font plus
\BIBentryALTinterwordstretchfactor\fontdimen3\font minus
  \fontdimen4\font\relax}
\providecommand\BIBforeignlanguage[2]{{%
\expandafter\ifx\csname l@#1\endcsname\relax
\typeout{** WARNING: IEEEtran.bst: No hyphenation pattern has been}%
\typeout{** loaded for the language `#1'. Using the pattern for}%
\typeout{** the default language instead.}%
\else
\language=\csname l@#1\endcsname
\fi
#2}}

\bibitem{Thrun02a}
S.~Thrun, ``Robotic mapping: A survey,'' in \emph{Exploring Artificial
  Intelligence in the New Millenium}, G.~Lakemeyer and B.~Nebel, Eds.\hskip 1em
  plus 0.5em minus 0.4em\relax Morgan Kaufmann, 2002.

\bibitem{Elfes89}
A.~Elfes, ``Occupancy grids: a probabilistic framework for mobile robot
  perception and navigation,'' Ph.D. dissertation, Electrical and Computer Eng.
  Dept., Carnegie Mellon University, 1989.

\bibitem{Collins_2000_3325}
R.~Collins, A.~Lipton, T.~Kanade, H.~Fujiyoshi, D.~Duggins, Y.~Tsin,
  D.~Tolliver, N.~Enomoto, and O.~Hasegawa, ``A system for video surveillance
  and monitoring,'' Robotics Institute, Carnegie Mellon University, Pittsburgh,
  PA, Tech. Rep. CMU-RI-TR-00-12, May 2000.

\bibitem{Hoover99}
A.~Hoover and B.~Olsen, ``A real-time occupancy map from multiple video
  streams,'' in \emph{Proc. Int. Conf. on Robotics and Automation}, 1999.

\bibitem{Burl02}
M.~Burl, B.~Sisk, T.~Vaid, and N.~Lewis, ``Classification performance of carbon
  black-polymer composite vapor detector arrays as a function of array size and
  detector composition,'' \emph{Sensors and Actuators B}, vol.~87, pp.
  130--149, 2002.

\bibitem{Li02}
D.~Li, K.~Wong, Y.~Hu, and A.~Sayeed, ``Detection, classification and tracking
  of targets in distributed sensor networks,'' \emph{IEEE Signal Processing
  Magazine}, pp. 17--29, March 2002.

\bibitem{Tian02}
Y.~Tian and H.~Qi, ``Target detection and classification using seismic signal
  processing in unattended ground sensor systems,'' in \emph{International
  Conference on Acoustics Speech and Signal Processing (ICASSP)}, vol.~4, May
  2002.

\bibitem{Shannon48}
C.~Shannon, ``A mathematical theory of communication,'' \emph{Bell System
  Technical Journal}, vol.~27, pp. 379--423 and 623--656, July and October
  1948.

\bibitem{RachlinITW04}
Y.~Rachlin, R.~Negi, and P.~Khosla, ``Sensing capacity for target detection,''
  in \emph{Proc. IEEE Inform. Theory Wksp.}, Oct. 24-29 2004.

\bibitem{DurrantWhyte94}
J.~Manyika and H.~Durrant-Whyte, \emph{Data Fusion and Sensor Management: A
  Decentralized Information-Theoretic Approach}.\hskip 1em plus 0.5em minus
  0.4em\relax Prentice Hall, 1994.

\bibitem{Csiszar98}
I.~Csiszar, ``The method of types,'' \emph{IEEE Trans. Inform. Theory},
  vol.~44, no.~6, 1998.

\bibitem{DuarteMelo03}
E.~J. Duarte-Melo and M.~Liu, ``Data-gathering wireless sensor networks:
  organization and capacity,'' \emph{Computer Networks: Special Issue on
  Wireless Sensor Networks}, vol.~43, 2003.

\bibitem{Gupta00}
P.~Gupta and P.~R. Kumar, ``The capacity of wireless networks,'' \emph{IEEE
  Trans. Inform. Theory}, vol.~46, no.~2, 2000.

\bibitem{Barrenchea04}
G.~Barrenechea, B.~Beferull-Lozano, and M.~Vetterli, ``Lattice sensor networks:
  Capacity limits, optimal routing and robustness to failures,'' in \emph{Third
  Int. Symp. Info. Proc. in Sensor Networks}, Apr. 2004.

\bibitem{Mitra04}
U.~Mitra and A.~Sabharwal, ``Complexity constrained sensor networks: Achievable
  rates for two relay networks and generalizations,'' in \emph{Third Int. Symp.
  Info. Proc. in Sensor Networks}, Apr. 2004.

\bibitem{Hu04}
Z.~Hu and B.~Li, ``Fundamental performance limits of wireless sensor
  networks,'' in \emph{Ad Hoc and Sensor Networks}, Y.~Xiao and Y.~Pan,
  Eds.\hskip 1em plus 0.5em minus 0.4em\relax Nova Science Publishers, 2004.

\bibitem{Slepian73}
D.~Slepian and J.~Wolf, ``Noiseless coding of correlated information sources,''
  \emph{IEEE Trans. Inform. Theory}, vol.~19, pp. 471--480, 1973.

\bibitem{Wyner76}
A.~Wyner and J.~Ziv, ``The rate-distortion function for source coding with side
  information at the receiver,'' \emph{IEEE Trans. Inform. Theory}, vol.~22,
  pp. 1--10, 1976.

\bibitem{Pradhan02}
S.~Pradhan, J.~Kusuma, and K.~Ramachandran, ``Distributed compression in a
  dense microsensor network,'' \emph{IEEE Signal Processing Magazine}, vol.~19,
  pp. 51--60, March 2002.

\bibitem{Xiong04}
Z.~Xiong, A.~Liveris, and S.~Cheng, ``Distributed source coding for sensor
  networks,'' \emph{IEEE Signal Processing Magazine}, vol.~21, pp. 80--94,
  2004.

\bibitem{Varshney97}
P.~Varshney, \emph{Distributed Detection and Data Fusion}.\hskip 1em plus 0.5em
  minus 0.4em\relax Springer-Verlag, 1997.

\bibitem{Chamberland03}
J.~Chamberland and V.~Veeravalli, ``Decentralized detection in sensor
  networks,'' \emph{IEEE Transactions on Signal Processing}, vol.~51, no.~2,
  pp. 407--416, 2003.

\bibitem{Chamberland04}
------, ``Asymptotic results for decentralized detection in power constrained
  wireless sensor networks,'' \emph{IEEE JSAC Special Issue on Wireless Sensor
  Networks}, vol.~22, no.~6, pp. 1007--1015, 2004.

\bibitem{DCosta04}
A.~D'Costa, V.~Ramachandran, and A.~Sayeed, ``Distributed classification of
  gaussian space-time sources in wireless sensor networks,'' \emph{IEEE J.
  Selected Areas in Communications (special issue on Fundamental Performance
  Limits of Wireless Sensor Networks)}, pp. 1026--1036, Aug. 2004.

\bibitem{Kotecha05}
J.~Kotecha, V.~Ramachandran, and A.~Sayeed, ``Distributed multi-target
  classification in wireless sensor networks,'' \emph{IEEE JSAC Special Issue
  on Self-Organizing Distributed Collaborative Sensor Networks}, 2005.

\bibitem{Chakrabarty01}
K.~Chakrabarty, S.~S. Iyengar, H.~Qi, and E.~Cho, ``Coding theory framework for
  target location in distributed sensor networks,'' in \emph{Proc. Int. Conf.
  on Inform. Technology: Coding and Computing}, April 2001.

\bibitem{Scaglione02}
A.~Scaglione and S.~D. Servetto, ``On the interdependence of routing and data
  compression in multi-hop sensor networks,'' in \emph{Proc. 8th ACM Int.
  Conference on Mobile Computing and Networking}, Sept. 2002.

\bibitem{Nowak04}
R.~Nowak, U.~Mitra, and R.~Willett, ``Estimating inhomogeneous fields using
  wireless sensor networks,'' \emph{IEEE Journal on Selected Areas in
  Communications}, vol.~22, no.~6, pp. 999--1006, August 2004.

\bibitem{Marco03}
D.~Marco, E.~Duarte-Melo, M.~Liu, and D.~Neuhoff, ``On the many-to-one
  transport capacity of dense wireless sensor networks and the compressibility
  of its data,'' in \emph{Information Processing in Sensor Networks}, 2003, pp.
  1--16.

\bibitem{Ishwar03}
P.~Ishwar, A.~Kumar, and K.~Ramachandran, ``Distributed sampling for dense
  sensor networks: A bit-conservation principle,'' in \emph{Information
  Processing in Sensor Networks}, 2003.

\bibitem{Bajwa05}
W.~Bajwa, A.~Sayeed, and R.~Nowak, ``Matched source-channel communication for
  field estimation in wireless sensor networks,'' in \emph{Proc. Fourth Int.
  Symp. on Information Processing in Sensor Networks}, April 25-27 2005.

\bibitem{Kumar04}
A.~Kumar, P.~Ishwar, and K.~Ramchandran, ``On distributed sampling of smooth
  non-bandlimited fields,'' in \emph{Int. Symp. on Information Processing in
  Sensor Networks}, April 2004.

\bibitem{Gastpar05}
M.~Gastpar and M.~Vetterli, ``Power, spatio-temporal bandwidth, and distortion
  in large sensor networks,'' \emph{IEEE Journal on Selected Areas in
  Communications}, vol.~23, no.~4, pp. 745--754, April 2005.

\bibitem{Gallager68}
R.~Gallager, \emph{Information Theory and Reliable Communications}.\hskip 1em
  plus 0.5em minus 0.4em\relax Wiley, 1968.

\bibitem{Pearl88}
J.~Pearl, \emph{Probabilistic Reasoning in Intelligent Systems: Networks of
  Plausible Inference}.\hskip 1em plus 0.5em minus 0.4em\relax Morgan Kaufmann,
  1988.

\bibitem{RachlinIPSN05}
Y.~Rachlin, R.~Negi, and P.~Khosla, ``Sensing capacity for discrete sensor
  network applications,'' in \emph{Proc. Fourth Int. Symp. on Information
  Processing in Sensor Networks}, April 25-27 2005.

\bibitem{Davisson81}
L.~Davisson, G.~Longo, and A.~Sgarro, ``The error exponent for the noiseless
  encoding of finite ergodic markov sources,'' \emph{IEEE Trans. Inform.
  Theory}, vol.~27, pp. 431--438, 1981.

\bibitem{RachlinISIT05}
Y.~Rachlin, R.~Negi, and P.~Khosla, ``Sensing capacity for markov random
  fields,'' in \emph{Proc. Int. Symp. on Information Theory}, 2005.

\bibitem{Cover91}
T.~M. Cover and J.~A. Thomas, \emph{Elements of Information Theory}.\hskip 1em
  plus 0.5em minus 0.4em\relax Wiley-Interscience, 1991.

\bibitem{RachlinIPSN06}
Y.~Rachlin, R.~Negi, and P.~Khosla, ``On the interdependence of sensing and
  estimation complexity in sensor networks,'' in \emph{Proc. Fifth Int. Conf.
  on Information Processing in Sensor Networks}, April 19-21 2006.

\bibitem{RachlinMILCOM06}
Y.~Rachlin, B.~Narayanaswamy, R.~Negi, J.~Dolan, and P.~Khosla, ``Increasing
  sensor measurements to reduce detection complexity in large-scale detection
  applications,'' in \emph{Proc. Military Communications Conference}, 2006.

\bibitem{RachlinALLERTON06}
Y.~Rachlin, R.~Negi, and P.~Khosla, ``Temporal sensing capacity,'' in
  \emph{Proc. Allerton Conference on Communication, Control, and Computing},
  2006.

\end{thebibliography}

\newpage

\begin{figure}[t]
\begin{center}
\includegraphics[height=1.6cm]{}
\caption{Sensor network model.} \label{fig.sensor_channel}
\end{center}
\end{figure}

\begin{figure}[t]
\begin{center}
\includegraphics[height=1.6cm]{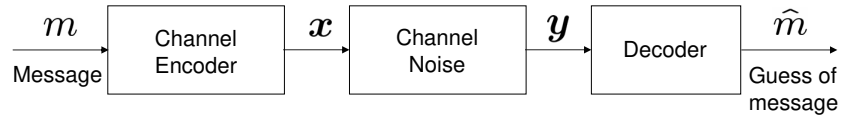}
\caption{Communication channel model.} \label{fig.communication_channel}
\end{center}
\end{figure}

\begin{figure}
\begin{center}
\includegraphics[height=6.0cm]{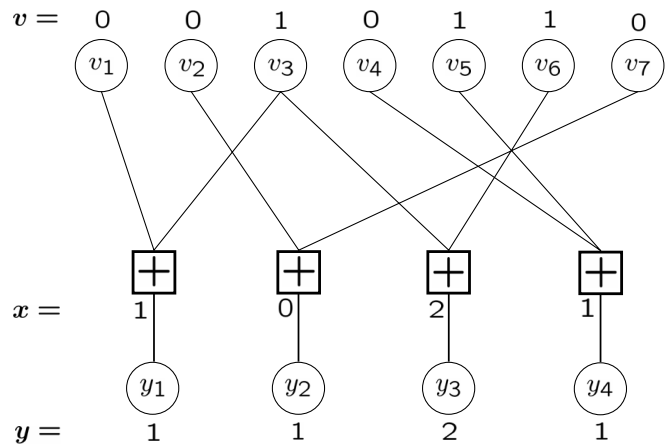}
\caption{Arbitrary connections model with $k=7, n=4, c=2,$ and a sum sensing function.}
\label{fig.ind_model}
\end{center}
\end{figure}

\begin{table}
\begin{center}
\begin{tabular}{|ccc|}
\hline $\boldsymbol{v_j}$ & $\bgamma$ of $\bVj$  & $\blambda$ of $\boldsymbol{v_j}$ with $\boldsymbol{v_i}= 0010110$ \\
\hline \hline
0010110   &  $\big(\frac{4}{7},\frac{3}{7}\big)$   &  $\big(\frac{4}{7},0,0,\frac{3}{7}\big)$  \\[3pt] 
0000110   &  $\big(\frac{5}{7},\frac{2}{7}\big)$   &  $\big(\frac{4}{7},0,\frac{1}{7},\frac{2}{7}\big)$  \\[3pt] 
1000011   &  $\big(\frac{4}{7},\frac{3}{7}\big)$   &  $\big(\frac{2}{7},\frac{2}{7},\frac{2}{7},\frac{1}{7}\big)$  \\[3pt] 
0000000   &  $\big(1,0\big)$   &  $\big(\frac{5}{7},0,\frac{3}{7},0\big)$  \\
\hline
\end{tabular}
\end{center}
\caption{ Joint types $\blambda$ for four pairs of target vectors. \label{tab.joint}}{}
\end{table}

\begin{table}
\begin{center}
\begin{tabular}{|c|ccc|}
\hline $X_i$ & $X_i=0$ & $X_i=1$ & $X_i=2$ \\ \hline
$P_{X_i}$ & $(\gamma_{0})^2$ & $2\gamma_{0}\gamma_{1}$ & $(\gamma_{1})^2$\\
\hline
%
\end{tabular}
\end{center}
\caption{Distribution of $X_i$ in terms of the type  $\bgamma$ of $\bVi$ when
$c=2$.\label{tab.XiTable}}{}
\end{table}

\begin{table}
\begin{center}
\begin{tabular}{|cccc|}
\hline
 $P_{X_iX_j}$  &            $X_j=0$             & $X_j=1$ & $X_j=2$
 \\ \hline\hline
 $X_i=0$ & $(\lambda_{00})^2$ & $2\lambda_{00}\lambda_{01}$ & $(\lambda_{01})^2$
 \\ 
 $X_i=1$ & $2\lambda_{00}\lambda_{10}$ & $2\left(\lambda_{10}\lambda_{01}+\lambda_{00}\lambda_{11}\right)$  & $2\lambda_{01}\lambda_{11}$
 \\ 
 $X_i=2$ & $(\lambda_{10})^2$            &  $2\lambda_{10}\lambda_{11}$ &  $(\lambda_{11})^2$\\
\hline
\end{tabular}
\end{center}
\caption{Joint distribution of $X_j$ and $X_i$ in terms of the joint type  $\blambda$ of
$\bVi,\bVj$ when $c=2$.\label{tab.jointTable}}{}
\end{table}

\begin{figure}[t!]
\begin{center}
\includegraphics[height=8cm,keepaspectratio=true]{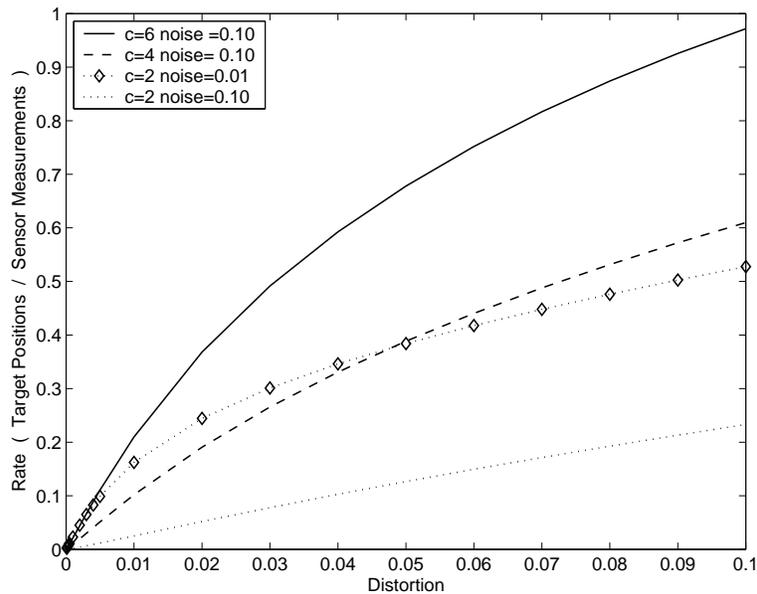}
\caption{$C_{LB}(D)$ of arbitrary connections model for sensors of varying noise levels
and range.} \label{plot.RD_indplot}
\end{center}
\end{figure}

\begin{figure}[t!]
\begin{center}
\includegraphics[height=8cm,keepaspectratio=true]{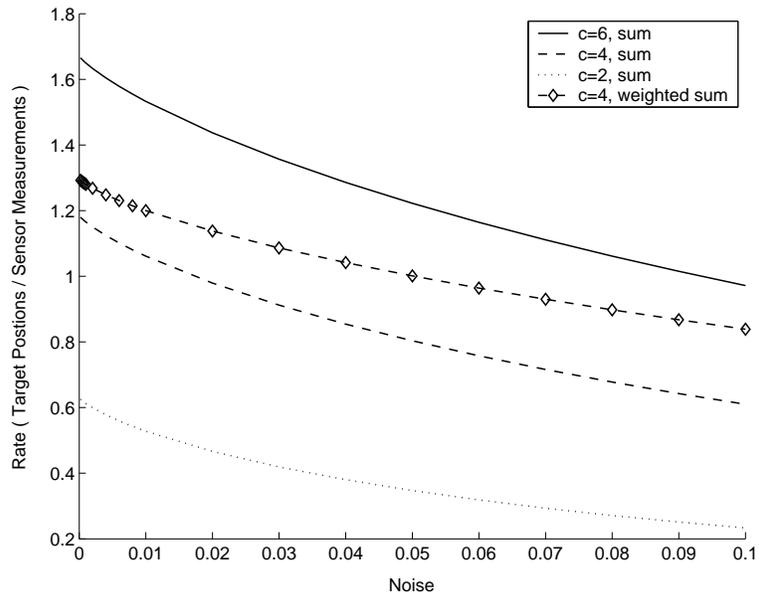}
\caption{$C_{LB}(0.1)$ of arbitrary connections model for sensors of varying noise
levels, range, and sensing function.} \label{plot.RN_indplot}
\end{center}
\end{figure}

\begin{figure}[t!]
\begin{center}
\includegraphics[height=8cm,keepaspectratio=true]{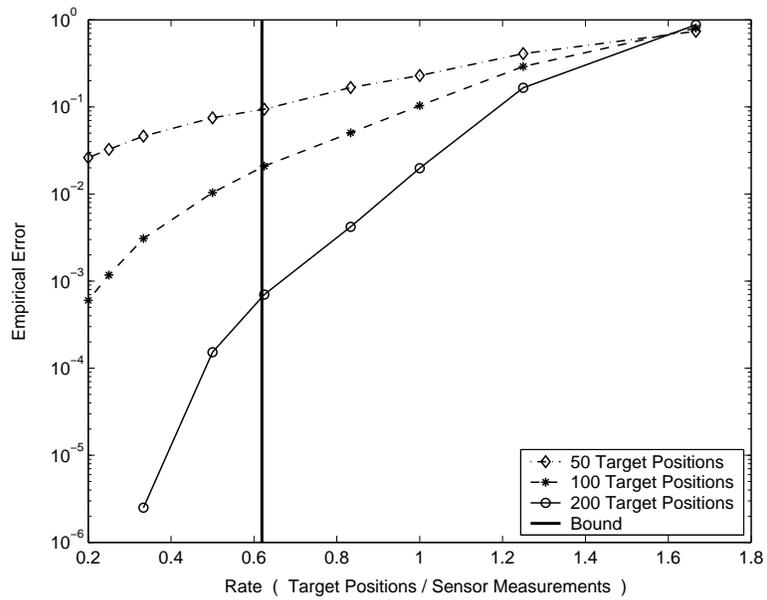}
\caption{Average empirical error rate of belief propagation based detection for varying
rates, and the corresponding sensing capacity bound.} \label{bp_plot}
\end{center}
\end{figure}

\begin{figure}[t]
\begin{center}
\includegraphics[height=5.2cm]{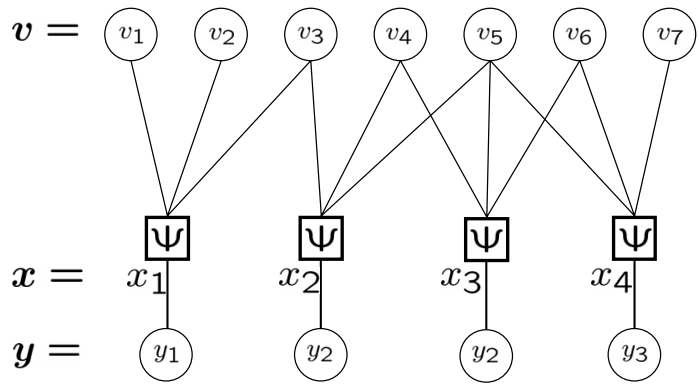}
\caption{Sensor network model with $k=7, n=3, c=3$, contiguous connections, and a sensing
function corresponding to the weighted sum of the observed targets.}
\label{fig:dep_model}
\end{center}
\end{figure}

\begin{table}[t]
\begin{center}
\begin{tabular}{|ccccc|}\hline
$\lambda_{(ab)(cd)}$ & $cd=00$ & $cd=01$ & $cd=10$ & $cd=11$\\
\hline\hline $ab = 00$ & 0 & 0 & 1/8 & 2/8 \\
 $ab = 01$ & 1/8 & 1/8 & 0 & 0\\
 $ab = 10$ & 1/8 & 1/8 & 0 & 0 \\
 $ab = 11$ & 0 & 0 & 1/8 & 0 \\ \hline
\end{tabular}
\end{center}
 \caption{ $\blambda$ with $c=2$ for $\bVi =
01101000$ and $\bVj = 01000111$.} \label{tab:lambda}
\end{table}

\begin{table*}[t]
\begin{center}
\begin{tabular}{|cccc|}\hline
 $P_{X_iX_j}$  &            $X_j=0$             & $X_j=1$ & $X_j=2$
 \\ \hline\hline
 $X_i=0$ & $\lambda_{(00)(00)}$ & $\lambda_{(00)(01)}+\lambda_{(00)(10)}$ & $\lambda_{(00)(11)}$
 \\
 $X_i=1$ & $\lambda_{(10)(00)}+\lambda_{(01)(00)}$ & $\lambda_{(01)(01)}+\lambda_{(01)(10)}+\lambda_{(10)(01)}+\lambda_{(10)(10)}$  & $\lambda_{(10)(11)}+\lambda_{(01)(11)}$
 \\
 $X_i=2$ & $\lambda_{(11)(00)}$    &  $\lambda_{(11)(01)}+\lambda_{(11)(10)}$   &  $\lambda_{(11)(11)}$    \\ \hline
\end{tabular}
\end{center}
\caption{Joint distribution of $X_j$ and $X_i$ in terms of the joint type $\blambda$ of
$\bVj$ and $\bVi$, with $c=2$. }
\label{tab:XiXj} 
\end{table*}

\begin{figure}[t!]
\begin{center}
\includegraphics[height=8cm,keepaspectratio=true]{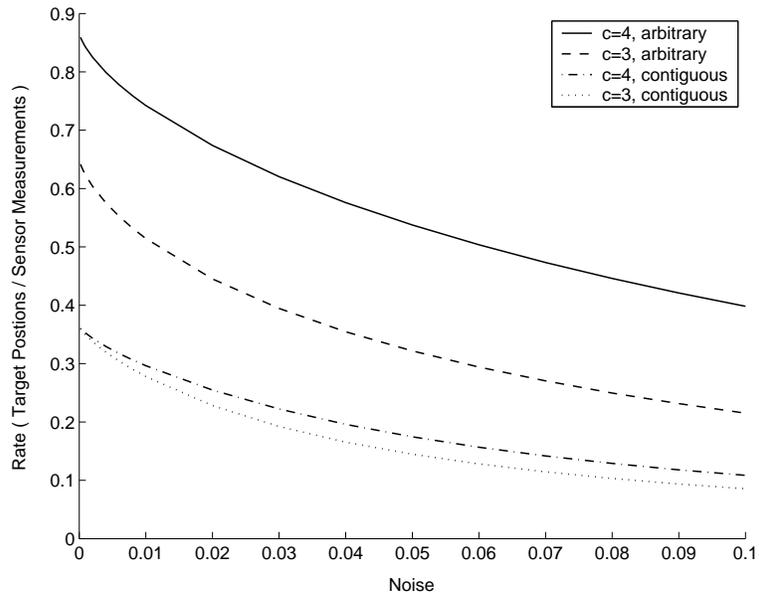}
\caption{$C_{LB}(0.025)$ for localized and non-localized sensors.} \label{indvsdep}
\end{center}
\end{figure}

\begin{figure}[t]
\begin{center}
\includegraphics[height=6.5cm]{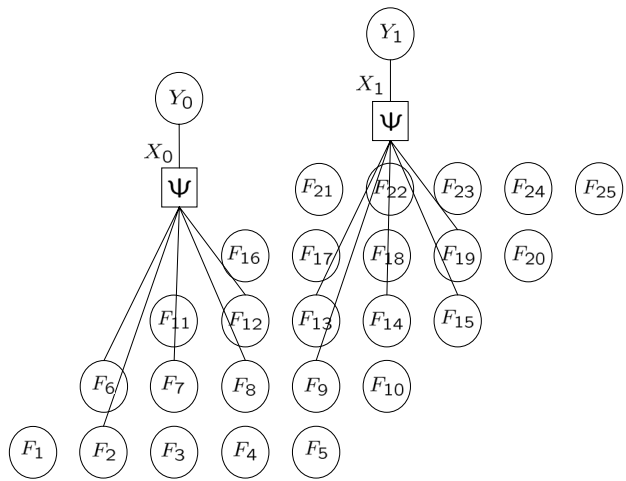}
\caption{Sensor network model with $k=5, n=2, c=1$.}
\label{2d_model}
\end{center}
\vspace{-0.3in}
\end{figure}

\end{document}